\newtheorem{theorem}{Theorem}[section]
\newtheorem{lemma}[theorem]{Lemma}
\theoremstyle{definition}
\newtheorem{definition}[theorem]{Definition}
\newcommand{\reals}{\mathbb{R}}
\newcommand{\ignore}[1]{}
\newcommand{\sinn}[1]{\sin\left({#1}\right)}
\newcommand{\coss}[1]{\cos\left({#1}\right)}
\newcommand{\cycle}[1]{\left(\coss{#1},\sinn{#1}\right)}
\newcommand{\shoreline}[1]{\textsc{Shoreline}$_{#1}$}
\begin{document}
\title{\bf 
Lower Bounds for Shoreline Searching \\
with 2 or More Robots
\footnote{
This is an updated version of the paper with the same title which will appear in the proceedings of the 23rd International Conference on Principles of Distributed Systems (OPODIS’19) Neuch\^atel, Switzerland, July 17-19, 2019.}
}

\author{
Sumi Acharjee\footnotemark[2] %~\footnotemark[5] 
\and
Konstantinos Georgiou\footnotemark[2] %~\footnotemark[5] 
\and
Somnath Kundu\footnotemark[2] %~\footnotemark[6]
\and 
Akshaya Srinivasan\footnotemark[3] %~\footnotemark[5]
}

\def\thefootnote{\fnsymbol{footnote}}
%\footnotetext[5]{Research supported in part by NSERC of Canada.}
%\footnotetext[6]{Research supported in part by the Ontario Graduate Scholarship (OGS) Program.}
\footnotetext[2]{
Dept. of Mathematics, Ryerson University, Toronto, Canada, \\ \texttt{
$\{$sumi.acharjee,konstantinos,somnath.kundu$\}$@ryerson.ca}
}
\footnotetext[3]{
Dept. of Computer Science \& Engineering,  National Institute of Technology, Tiruchirappalli, India,
\texttt{akshaya.kms@gmail.com}
}

\maketitle

\begin{abstract}
Searching for a line on the plane with $n$ unit speed robots is a classic online problem that dates back to the 50's, and for which competitive ratio upper bounds are known for every $n\geq 1$, see~\cite{baeza1995parallel}.
In this work we improve the best lower bound known for $n=2$ robots~\cite{baeza1995parallel} from 1.5993 to 3. Moreover we prove that the competitive ratio is at least $\sqrt{3}$ for $n=3$ robots, and at least $1/\coss{\pi/n}$ for $n\geq 4$ robots.
Our lower bounds match the best upper bounds known for $n\geq 4$, hence resolving these cases. 
To the best of our knowledge, these are the first lower bounds proven for the cases $n\geq 3$ of this several decades old problem.

\vspace{0.5cm}
\noindent
{\bf Key words and phrases:} 2-Dimensional Search, Online Algorithms, Competitive Analysis, Lower Bounds. 
\end{abstract}

\section{Introduction}

Searching for a shoreline is the problem in which a number of identical unit speed searchers, starting from the same point on the plane, need to agree on trajectories so as to hit (eventually) any line on the plane. The underlying optimization problem asks for fixed  trajectories, one for each searcher, so as to minimize the worst case relative time untill the first searcher hits the line, i.e. the time untill the line is found divided by the distance of the line to the origin. This two-dimensional search-type problem has a long history, and conjectured optimal strategies have been proposed for every $n\geq 1$ (see Section~\ref{sec: related work} for detailed discussion), where $n$ is the number of searchers (robots). 
Similarly to the much easier one dimensional analog of the problem, known as the cow-path problem, showing competitive ratio lower bounds for the problem has been a much more challenging task. Indeed, for the shoreline problem, very weak unconditional lower bounds are known for $n=1$, while the only non-trivial lower bounds known for other values of $n$ is that for $n=2$.

In this work we improve the state-of-the-art when it comes to competitive ratio lower bounds for searching for a shoreline with $n\geq 2$ robots. In particular, we improve the best lower bound known for $n=2$ robots, from 1.5993 to 3. Then, we prove the first lower bounds for $n\geq 3$ robots. More specifically, we show a lower bound of $\sqrt{3}$ for 3 robots, and $1/\coss{\pi/n}$ for $n\geq 4$ robots, matching this way the best upper bound known for the latter case.

\subsection{Related Work}
\label{sec: related work}

Theory of search has a long history that dates back to the 50's, see \cite{beck1964linear,bellman1963optimal}.
In one of the simplest continuous problems, a unit speed robot is moving on an infinite line, and its goal is to hit every point (bounded away from the origin) within bounded relative time. The problem, now known as linear-search or cow-path, was restudied by the computer science community in the late 80's in~\cite{baeza1988searching},
and became so fruitful that numerous variations emerged with challenging and particularly interesting algorithmic problems. Indeed over the decades, accumulated results were summarized in a number of interesting surveys, e.g. 
\cite{benkoski1991survey,CGK19search,dobbie1968survey,gal2010search}. % listed in chronological order.
Moreover, the underlying mathematical theory became rich enough to give rise to a number of related books, with \cite{alpern2003theory,alpern2013search,alpern2006theory} being the most relevant and influential. 

Among the numerous variations/generalizations of the cow-path problem, the current work focuses on its 2-dimensional analog, that we call the \shoreline{n}\ problem, in which $n$ robots are searching in parallel on the plane for a line. As it is outside the scope of this paper to do a thorough literature review on search-type problems, we refer the reader to the aforementioned surveys and books for all remotely related results, and we focus here on the literature closely related to the shoreline problem, i.e. to 2-dimensional search problems with $n\geq 1$ robots. 
The language that we adopt for quantifying algorithms' performance is that of competitive analysis, e.g. see~\cite{borodin2005online}. In particular, we think of our problem as $n$ robots, starting from the origin, that are solving an online problem in which a line is placed at an unknown location $\delta$ away from the origin, where in particular $\delta$ is unknown (but it is bounded away from 0, and that bound is known).
The goal of the search is to minimize the relative worst case search time, i.e. the time untill the line is found by any robot divided by $\delta$, over all possible placements of lines and over all $\delta$. The best possible relative time is known as the \textit{competitive ratio} of the problem, and can be thought as the best worst case relative performance of an online algorithm (that does not know the input) compared to the performance of the best offline algorithm (that knows the input).

Searching for a (shore)line with 1 robot, without any knowledge of its distance to the origin, was first proposed in~\cite{baeza1988searching}, and a number of improvements were proposed for parallel search 
in~\cite{baeza1997searching,baeza1995parallel,baezayates1993searching,jez2009two}, 
i.e. for $n\geq 2$ robots and a number of variations. 
The best algorithm known for \shoreline{1} is a logarithmic spiral search that has competitive ratio $13.81$~\cite{baeza1988searching}. Notably, the only unconditional lower bound for the problem is that the competitive ratio is at least $6.3972$~\cite{baeza1995parallel} which also holds true if the online algorithm knows, a priori, the distance of the line to the origin. Only assuming a cyclic-type trajectory, the competitive ratio is provably at least $12.5385$~\cite{langetepe2012searching}.

The overall picture for searching with $n\geq 2$ robots for a line (without any knowledge about the hidden line) is much more blurry. For $n=2$, a double logarithmic spiral, in which the origin lies always in the middle of the locations of the robots is known to induce competitive ratio $5.2644$~\cite{baeza1995parallel}. The only lower bound to the problem is due to the variation in which the distance is known, and it is 1.5993~\cite{baeza1995parallel}. For $n\geq 3$, the natural algorithm of~\cite{baeza1995parallel} makes robots move along rays, splitting the plane evenly, and induces competitive ratio at most $1/\coss{\pi/n}$. To the best of our knowledge, no competitive ratio lower bounds have been reported for problems $n\geq 3$.  

Some relevant variations to our problem are those in which partial information, e.g. the slope or the distance to the origin, is known regarding the hidden line. All results in this paragraph refer to searching with 1 robot. 
When both distance and slope are known, the best possible algorithm has competitive ratio 3. When the distance is known, and the line is axis parallel, then the best competitive ratio is $3\sqrt{2}$~\cite{baeza1988searching}. 
If only the distance is known, \cite{isbell1957optimal} gives the best deterministic online algorithm with competitive ratio $6.39$. Randomized online algorithms for the same problem were proposed in~\cite{gluss1961alternative,gluss1961minimax}.
The problem in which the slope is known is the traditional cow-path problem with best possible competitive ratio 9 and was studied in \cite{baeza1988searching,gal2010search}.
When the line is known to be axis parallel, then \cite{baeza1988searching} gives an upper bound of 13.02, which was improved to $12.5406$~\cite{jez2009two} and then to $12.5385$~\cite{langetepe2012searching}, the latter shown to be optimal among  cyclic-type trajectories. 
As stated previously, when no information is known the best upper and lower bounds known are $13.81$ and (conditionally to cyclic-trajectories) $12.5385$, respectively, due to \cite{baeza1988searching} (technical report~\cite{finch2005searching} has a nice exposition of the same upper bound with all mathematical derivations).

Two-dimensional search problems have been considered beyond line searching. Indeed, \cite{gluss1961minimax} considered the problem of searching for a circle. In 2010, Langetepe~\cite{langetepe2010optimality} showed that spiral search is optimal for 2 dimensional search by one robot, assuming that all points that are convex combinations of robot's trajectory and the origin are seen/discovered. 
The same problem with more robots was studied in \cite{fricke2016distributed}.
Papers
\cite{Emekicalp2014,emek2015many,LangnerKUW15,Lenzen2014,LS01}
consider parallel search on the grid with bounded memory robots. 
\cite{bouchard2018deterministic} and \cite{pelc2018reaching}  considered other variations of the problem of searching for a point in the plane, while~\cite{pelc2018information} considered searching for a point within a geometric terrain. Finally, the very recent~\cite{pelc2019cost} studied cost/information trade-offs for searching in the plane for a point.

\subsection{Problem Definition and Summary of Known and New Results}
\label{sec: def}

We begin this section with a formal description of the two-dimensional search problem, first considered in~\cite{baeza1988searching}. 

\begin{definition}[\shoreline{n}: Searching for a Shoreline with $n$ Robots] ~\\
$n$ unit speed robots start from the origin of the plane. 
Feasible solutions to the problem are robots' trajectories $\mathcal F_n$, such that for every line $\ell$ of the plane, there exists at least one robot's trajectory intersecting $\ell$. The time $T_{\mathcal{F}_n}(\ell)$ by which $\ell$ is hit for the first time is the \textit{search completion time}. If $\delta(\ell)$ represents the distance of $\ell$ to the origin, the objective of \shoreline{n} is to find trajectories $\mathcal F_n$ so as to minimize the search competitive ratio of $\mathcal F_n$ defined as
\begin{equation}
\label{equa: search comp ratio}
CR(\mathcal F_n):= \sup_{\ell} \frac{T_{\mathcal F_n}(\ell)}{\delta(\ell)}.
\end{equation}
The best possible search competitive ratio $\inf_{\mathcal F_n} CR(\mathcal F_n)$ will be denoted by $\mathcal S_n$. 
\end{definition}

In order to avoid degenerate cases, especially when $n=1$, the supremum of~\eqref{equa: search comp ratio} can be restricted to lines $\ell$ for which $\delta(\ell)\geq \epsilon$, for some $\epsilon >0$ that is known to algorithm $\mathcal F_n$. Also, for the rest of the paper, and when it is convenient, we will study \shoreline{n} from the perspective of analytic geometry, that is robots will start from the origin of the Cartesian plane, and trajectories will be analytic curves in the plane.

Problem~\shoreline{n}, and variations of it, have been studied as early as in the late 50's for $n=1$ and in the 80's for $n\geq 2$. 
An upper bound to $\mathcal S_1$ of 13.81 was reported in~\cite{baeza1988searching}
and a lower bound of $12.5385$ in~\cite{langetepe2012searching}, assuming that the solution trajectory is of spiral-type. 
The only unconditional lower bound to $\mathcal S_1$ is that of 6.3972, see~\cite{isbell1957optimal}, and refers to instances/lines with known distance to the origin, and hence apply to instances with unknown distance as well. 

Notably, for the case $n\geq 2$ almost no lower bounds are known and are restricted to instances with known distance to the origin.
Indeed, \cite{baeza1995parallel} reports that $\mathcal S_2 \geq 1.5993$ and the trivial $\mathcal S_n \geq 1$, for $n\geq 4$. When it comes to upper bounds, a double-spiral trajectory performed by two robots ensures that $\mathcal S_2 \leq 5.2644$, see~\cite{baeza1995parallel}. For the case $n\geq 3$, \cite{baeza1995parallel} proposes the following ray-type algorithm: for $i=1,\ldots,n$, robot $i$, searches along the ray with direction $\cycle{(i-1)\phi }$, where $\phi=2\pi/n$. It is an easy exercise to show that this algorithm witnesses that $\mathcal S_n \leq \tfrac{1}{\coss{\pi/n}}$, again for $n\geq 3$. 

\subsubsection{Organization of the paper} Our main contributions pertain to \textit{new} lower bounds for $\mathcal S_n$, when $n\geq 2$, which in particular for $n\geq 4$ are tight. 
More specifically, we show 
that $\mathcal S_2 \geq 3$ (see Theorem~\ref{thm: n=2 lower bound} in Section~\ref{sec: n=2 lower bound}), 
that $\mathcal S_3 \geq \sqrt{3}$ (see Theorem~\ref{thm: n=3 lower bound} in Section~\ref{sec: n=3 lower bound}), 
and that $\mathcal S_n \geq \tfrac{1}{\coss{\pi/n}}$ (see Theorem~\ref{thm: n>=4 lower bound} in Section~\ref{sec: n>=4 lower bound}) for $n\geq 4$. The exposition of the results is in reverse order due to the nature of our arguments. 
Combined with the known upper bounds discussed above, our results imply the following state-of-the-art regarding problem \shoreline{n}, when $n\geq 2$. 
\begin{theorem}
For the best possible competitive ratio $\mathcal S_n$ for \shoreline{n} we have that:\\
$3\leq \mathcal S_2 \leq 5.2644$, \\
$\sqrt{3} \leq \mathcal S_3 \leq 2$, \\
$\mathcal S_n = 1/\coss{\pi/n}$, for all $n\geq 4$.
\end{theorem}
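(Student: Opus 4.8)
The plan is to obtain the upper bounds with no extra work and to spend all the effort on the three lower bounds, which I would establish by one unified method. For the upper bounds: $\mathcal{S}_2 \le 5.2644$ is the double logarithmic spiral strategy of~\cite{baeza1995parallel}, while $\mathcal{S}_3 \le 2$ and $\mathcal{S}_n \le 1/\cos(\pi/n)$ for $n \ge 4$ come from the evenly spaced rays strategy recalled in Section~\ref{sec: def} (for $n=3$ this is $1/\cos(\pi/3)=2$). The lower bounds $\mathcal{S}_2 \ge 3$, $\mathcal{S}_3 \ge \sqrt{3}$ and $\mathcal{S}_n \ge 1/\cos(\pi/n)$ for $n \ge 4$ are Theorems~\ref{thm: n=2 lower bound}, \ref{thm: n=3 lower bound} and~\ref{thm: n>=4 lower bound}; for $n \ge 4$ the lower bound meets the ray upper bound, which yields the third line. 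So only the lower bounds remain.

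For those, the first step I would take is to convert the online question into a static convex geometry one. Suppose some set of trajectories achieves competitive ratio $c$. Then for every unit direction $u$ and every $\delta$, the line with normal $u$ at distance $\delta$ from the origin is hit by time $c\delta$; a connected curve through the origin hits that line exactly when its support function in direction $u$ reaches $\delta$. Hence, writing $X_t$ for the union of the $n$ trajectory prefixes up to time $t$ --- a union of $n$ curves through the origin of length at most $t$ each --- the support function of $X_t$ is at least $t/c$ in every direction, that is $\mathrm{conv}(X_t) \supseteq \overline{B}(O, t/c)$, for all large $t$. Rescaling $X_t$ by $1/t$, this says: there are $n$ curves through the origin, each of length at most $1$, whose convex hull contains $\overline{B}(O, 1/c)$. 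Therefore $\mathcal{S}_n \ge 1/\rho_n$, where $\rho_n$ is the largest radius of an origin centered disk fitting inside the convex hull of $n$ curves of length at most $1$ emanating from the origin. The three theorems thus reduce to $\rho_n \le \cos(\pi/n)$ for $n \ge 4$ (placing the $n$ curves as the radii of a regular $n$-gon shows $\rho_n \ge \cos(\pi/n)$, hence $\rho_n=\cos(\pi/n)$ exactly, which matches the ray upper bound, so $n \ge 4$ is then completely settled), $\rho_3 \le 1/\sqrt{3}$, and $\rho_2 \le 1/3$.

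The second step bounds $\rho_n$. A convex hull contains $\overline{B}(O,\rho)$ iff in every direction at least one of the curves reaches depth $\ge \rho$, so the $n$ curves share out the circle of directions and, by pigeonhole, some single curve is responsible (possibly over several disjoint pieces, though a single arc is the cheapest case) for directions of total width $\ge 2\pi/n$. Everything then comes down to a one-curve extremal problem: for a unit-speed curve from the origin of length $1$, determine the largest depth $\rho$ it can guarantee over an angular window of width $w$; writing this as $1/\lambda(w)$ gives $\rho_n \le 1/\lambda(2\pi/n)$, and, matching it from below with $n$ rotated copies of the extremal curve, $\mathcal{S}_n \ge \lambda(2\pi/n)$. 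This is a small calculus-of-variations problem with a radial versus tangential trade-off: going straight out maximizes depth but only over a thin cone, whereas turning widens the cone at the cost of reach. For thin windows, $w \le \pi/2$ (that is $n \ge 4$), I expect the straight radius to be optimal, giving $\lambda(w)=\sec(w/2)=1/\cos(\pi/n)$; for the wider windows $w=2\pi/3$ and $w=\pi$ relevant to $n=3$ and $n=2$ the optimizer is genuinely curved and $\lambda$ comes out to $\sqrt{3}$ and $3$ respectively. Note that for $n=2$ each curve must cover a window of width $\pi$, so curvature is unavoidable --- two straight radii do not cover the plane at all --- which is exactly why the $n=2$ case is the most delicate and why the previously known bound for it was so weak.

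The main obstacle, I expect, is precisely that a line may be hit by a \emph{past} position of a robot rather than its current one, so the covering condition genuinely involves the whole trajectory and one cannot simply count current positions (two current positions never cover all directions, which is the crux of the $n=2$ difficulty). Making the passage to the clean convex hull statement rigorous is the first hurdle; the heavier one is the single-curve extremal problem together with the optimization over how to split the directions among the $n$ curves --- in particular pinning down the exact constants $\sqrt{3}$ and $3$, for which the extremal curve is not a straight segment and one must rule out every competing shape, including circular arcs, polygonal paths, and the continuum in between.
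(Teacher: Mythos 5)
Your upper bounds are exactly the paper's (cited prior work), so everything rests on the lower bounds, and there your route is genuinely different from the paper's. The snapshot reduction you sketch is sound: if $CR(\mathcal F_n)=c$, then at any time $t$ every line at distance $t/c$ has been hit, so the union of the $n$ prefixes (each a curve of length at most $t$ from $O$) has support function at least $t/c$ in every direction; rescaling gives $n$ curves of length $1$ whose direction sets $D_i=\{u:\ \text{depth}\ge 1/c\}$ are closed and cover the circle. The paper never passes to this static picture: for $n\ge 4$ and $n=3$ it rotates cones and uses an explicit adversarial chord plus an ``exit the cone'' estimate (Lemmas~\ref{lem: OMB}, \ref{lem: OAB trajectory}, \ref{lem: scaling}), and for $n=2$ it uses the ellipse argument of Lemma~\ref{lem: ellipses}, which in addition charges the robots for the distance they still must travel to the unvisited line $y=-1/2-\zeta$; that remaining-distance term is precisely how the paper upgrades $2$ to $3$. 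Your reduction discards that term, yet still yields $3$ for $n=2$: a closed $D_i$ of measure at least $\pi$ must contain an antipodal pair $\{u,-u\}$ (otherwise $D_i$ and $-D_i$ would be disjoint nonempty closed sets covering the circle), and a curve from $O$ reaching depth $\rho$ in directions $u$ and $-u$ has length at least $\rho+2\rho=3\rho$.

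The genuine gap is the extremal step, which carries all the weight and which you only assert; moreover the values you assert are not correct. For a window of width $w$, the straight segment of length $\sec(w/2)$ ending at the intersection point of the two extreme supporting lines already reaches depth $1$ in every direction of the window; conversely, reaching just the two extreme half-planes $H_u=\{\langle x,u\rangle\ge 1\}$ and $H_v=\{\langle x,v\rangle\ge 1\}$ with $\angle(u,v)=w$ forces length at least $\min_{P\in\partial H_u}\bigl(|OP|+d(P,H_v)\bigr)=\sec(w/2)$ for every $w\le 2\pi/3$. Hence the straight segment is optimal for all $w\le 2\pi/3$ (not only $w\le\pi/2$), and $\lambda(2\pi/3)=2$, not $\sqrt3$; there is no ``genuinely curved'' optimizer there. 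Likewise $\lambda(\pi)$ is strictly larger than $3$; the clean constant $3$ comes from the two antipodal directions alone, not from solving the width-$\pi$ window problem exactly. Your auxiliary claim that ``a single arc is the cheapest case'' is also unproven, but it is unnecessary: from a closed direction set of measure at least $2\pi/n$ one can directly extract two directions at angular distance at least $2\pi/n$ (for $n\ge 3$, if the diameter $d$ were less than $2\pi/n\le 2\pi/3$ the set would lie in an arc of length $d$, contradicting its measure; for $n=2$ use the antipodal argument above), after which only the elementary two-half-plane computation is needed --- no calculus of variations and no ruling out of ``every competing shape.'' All the inequalities you actually need ($\lambda(\pi)\ge 3$, $\lambda(2\pi/3)\ge\sqrt3$, $\lambda(2\pi/n)\ge\sec(\pi/n)$) are true, so the plan can be completed, but as written the decisive bounds are unproven and two of the claimed extremal values are wrong.

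One consequence you should examine with care: with the corrected value $\lambda(2\pi/3)=2$, your reduction appears to give $\mathcal S_3\ge 2$, strictly more than the $\sqrt3$ of Theorem~\ref{thm: n=3 lower bound} and matching the ray upper bound. That would only strengthen the statement being proved, but it exceeds what the paper establishes and is a clear sign that your constants were not actually derived from the extremal problem; before claiming anything of the sort, verify the full chain (snapshot condition, pigeonhole over closed direction sets, extraction of two directions at least $2\pi/3$ apart, and the length bound $\ge 2\rho$).
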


\section{Lower Bounds for $n\geq 4$ Robots}
\label{sec: n>=4 lower bound}

This section is devoted to proving the following theorem. 
\begin{theorem}
\label{thm: n>=4 lower bound}
For all $n\geq 4$, we have $\mathcal S_n \geq 1/\coss{\pi/n}$.
\end{theorem}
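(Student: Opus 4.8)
The plan is to fix an arbitrary feasible trajectory system $\mathcal F_n = (\gamma_1,\dots,\gamma_n)$ for $n$ robots and exhibit, for every $\varepsilon>0$, a line $\ell$ with $T_{\mathcal F_n}(\ell)/\delta(\ell) \ge 1/\coss{\pi/n} - \varepsilon$. The central observation is geometric: at any time $t$, the $n$ robot positions $\gamma_1(t),\dots,\gamma_n(t)$ all lie inside the disk of radius $t$ centered at the origin, since each robot has unit speed. Consider the directions from the origin to these $n$ points. Among $n$ directions (unit vectors) in the plane, two of them must subtend an angle of at least $2\pi/n$ at the origin — more precisely, there must be an angular gap of width at least $2\pi/n$ between two cyclically consecutive directions. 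I would make this precise by a pigeonhole argument on the circular arrangement of the $n$ angles $\theta_i(t) = \arg \gamma_i(t)$.

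Next I would translate ``a wide angular gap at time $t$'' into ``there is a nearby line not yet hit.'' Suppose at time $t$ there is an angular sector of opening angle $\ge 2\pi/n$ (apex at the origin) containing no robot. Place a line $\ell$ perpendicular to the bisector of this empty sector, on the far side, at distance $\delta$ from the origin chosen so that the line has not been hit by time $t$. The key estimate: a robot at distance at most $t$ from the origin whose direction is outside the empty sector is separated from the halfplane beyond $\ell$; quantitatively, if the robot's angular position differs from the sector bisector by at least $\pi/n$, then the robot's signed distance toward $\ell$ is at most $t\coss{\pi/n}$ (this is the worst case, attained when the robot sits on the boundary ray of the sector at full radius $t$). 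Hence as long as $\delta > t\coss{\pi/n}$, no robot has crossed $\ell$ by time $t$, so $T_{\mathcal F_n}(\ell) \ge t$. Choosing $\delta$ arbitrarily close to $t\coss{\pi/n}$ from above (for a suitably large $t$ to respect the $\delta \ge \epsilon$ constraint) yields $T_{\mathcal F_n}(\ell)/\delta(\ell) \ge t/(t\coss{\pi/n}) \cdot (1 - o(1)) \to 1/\coss{\pi/n}$.

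The subtle point — and what I expect to be the main obstacle — is making the argument uniform in $t$ rather than pointwise. The empty sector can rotate as $t$ varies, and a line fixed relative to the sector at one time may be crossed earlier or later by a robot exploiting a different configuration. The clean way around this is to argue by contradiction: assume $CR(\mathcal F_n) < 1/\coss{\pi/n}$, so every line $\ell$ is hit by time strictly less than $\delta(\ell)/\coss{\pi/n}$. Fix a large time $t$ and apply the pigeonhole step to get an empty sector of angle $\ge 2\pi/n$ at that specific time $t$; then the line placed as above at distance $\delta = t\coss{\pi/n}$ (or just above) is not yet hit at time $t$, contradicting the assumed bound $T_{\mathcal F_n}(\ell) < \delta/\coss{\pi/n} = t$. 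This localizes everything to a single time and sidesteps the rotation issue entirely.

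The remaining routine work is: (i) the pigeonhole lemma on $n$ angles on a circle — with $n\ge 4$ there is no degeneracy, and even with ties some gap of size $\ge 2\pi/n$ exists; (ii) the planar trigonometric estimate bounding a robot's progress toward $\ell$ by $t\coss{\pi/n}$, which follows from projecting $\gamma_i(t)$ onto the outward normal of $\ell$ and using $\norm{\gamma_i(t)} \le t$ together with the angular separation; and (iii) handling the $\delta(\ell)\ge\epsilon$ technicality by simply taking $t$ large enough that $t\coss{\pi/n} \ge \epsilon$. I would also remark that this bound is tight, matching the ray algorithm of~\cite{baeza1995parallel} described in Section~\ref{sec: def}, which is exactly the equality case of the estimate in (ii).
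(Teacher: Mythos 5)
Your pigeonhole step and the placement of a line perpendicular to the bisector of an empty sector, at distance just above $t\coss{\pi/n}$, match the paper's setup, but your key estimate (ii) does not prove what you need. Projecting the positions $\gamma_i(t)$ onto the outward normal of $\ell$ only shows that no robot is at or beyond $\ell$ \emph{at the snapshot time} $t$; the completion time $T_{\mathcal F_n}(\ell)$ is the first time the line is \emph{touched}, so you must also exclude the possibility that some robot entered the sector earlier, touched the chord of $\ell$ lying inside the sector at some time $s<t$, and then left the sector again before time $t$ (the parts of $\ell$ outside the sector lie at distance at least $\delta/\coss{\pi/n}>t$ from the origin and are easy to exclude, but the chord is not). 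Your contradiction/localization framing does not repair this: the assumed bound $CR(\mathcal F_n)<1/\coss{\pi/n}$ is contradicted only if $\ell$ is genuinely unhit throughout $[0,t]$, and your argument still certifies only the time-$t$ configuration. This is exactly the content of the paper's Lemmas~\ref{lem: OMB} and~\ref{lem: OAB trajectory}: by a reflection argument, any unit-speed path that starts at the origin, touches the chord at a point $K$, and exits through a bounding ray at a point $L$ has length $OK+KL\geq OB=\delta/\coss{\pi/n}>t$, and this is precisely where the hypothesis that the half-angle is at most $\pi/4$, i.e.\ $n\geq 4$, enters.

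A concrete sign that the gap is real is that nothing in your argument uses $n\geq 4$. Run verbatim with $n=2$ it produces an empty sector of half-angle $\pi/2$, a projection bound $t\coss{\pi/2}=0$, and hence an unbounded lower bound on $\mathcal S_2$, contradicting the upper bound $5.2644$ of~\cite{baeza1995parallel}; for $n=3$ it would give $\mathcal S_3\geq 2$, which the paper explicitly cannot reach because the ``touch the chord and still escape the cone'' trajectories become too cheap once the half-angle exceeds $\pi/4$ (this is why Section~\ref{sec: n=3 lower bound} only obtains $\sqrt{3}$ via a rescaled version of the same idea). So the missing reflection/triangle-inequality lemma is not a routine detail to be filled in; it is the heart of the paper's proof.
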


The proof of the theorem above is split in a number of lemmata. 
First, we show in Lemma~\ref{lem: OMB} that under certain conditions, optimal robots' moves are along straight lines.

\begin{lemma}
\label{lem: OMB}
Consider right triangle $OMB$ with $\angle{BOM}\leq \pi/4$ and $\angle{OMB}=\pi/2$. 
Then for every point $K$ in the line segment $MB$ and every point $L$ in the line segment $OB$ we have that $OK+KL\geq OB$, and equality is satisfied only when all points $K,B,L$ coincide. 
\end{lemma}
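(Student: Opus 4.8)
The plan is to reduce everything to the triangle inequality by a reflection. Let $O'$ denote the reflection of $O$ about the line through $M$ and $B$. Since $K$ lies on segment $MB$, which is part of the mirror line, we have $OK = O'K$; since $B$ lies on the mirror line, $O'B = OB$; and since the mirror line passes through $B$, it bisects $\angle OBO'$, so $\angle OBO' = 2\,\angle OBM$. Applying the triangle inequality to the points $O'$, $K$, $L$ gives
\[
OK + KL \;=\; O'K + KL \;\geq\; O'L ,
\]
so it suffices to prove that $O'L \geq O'B\ (=OB)$ for every point $L$ of segment $OB$.

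The one place the hypothesis is used is in this second step. From $\angle OMB = \pi/2$ we get $\angle OBM = \pi/2 - \angle BOM \geq \pi/4$, hence $\angle OBO' = 2\,\angle OBM \geq \pi/2$. Fix $L$ on segment $OB$ with $L \neq B$. Since $L$ lies on the ray emanating from $B$ towards $O$, the angle of triangle $O'BL$ at the vertex $B$ equals $\angle OBO' \geq \pi/2$; as a triangle cannot have two non-acute angles, this is strictly its largest angle, and therefore the opposite side $O'L$ is strictly its longest side. In particular $O'L > O'B$, while trivially $O'L = O'B$ when $L = B$. Together with the displayed inequality this yields $OK + KL \geq OB$.

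For the equality case, suppose $OK + KL = OB$. Then both inequalities above must be equalities: $O'L = O'B$ forces $L = B$ by the previous paragraph, and $O'K + KL = O'L$ forces $K$ to lie on segment $O'L = O'B$. But $K$ also lies on the line $MB$, and $O'$, being the mirror image of $O$, lies strictly on the side of line $MB$ opposite to $O$; hence the segment $O'B$ meets line $MB$ only at $B$, so $K = B$. Thus $K$, $L$ and $B$ all coincide, as required.

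I expect the only subtle point to be the angle bookkeeping — verifying that $\angle OBO' = 2\,\angle OBM$ and that $\angle BOM \leq \pi/4$ is precisely what makes $\angle OBO' \geq \pi/2$, i.e.\ what makes $B$ the nearest point of segment $OB$ to $O'$. Everything else is the triangle inequality and the elementary fact that in a triangle the longest side faces the largest angle.
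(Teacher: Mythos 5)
Your proof is correct and follows essentially the same route as the paper's: reflect $O$ across line $MB$ and use the hypothesis $\angle BOM\leq\pi/4$ to show that in triangle $O'BL$ the angle at $B$ dominates the angle at $L$, whence $O'L\geq O'B=OB$. The differences are only cosmetic --- you apply the plain triangle inequality $O'K+KL\geq O'L$ for an arbitrary $K$ instead of the paper's law-of-reflection reduction to the optimal $K$, and you spell out the equality case more explicitly --- but the key reflection idea and the role of the angle condition are identical.
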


\begin{proof}
Consider an arbitrary point $L$ in the interior of the line segment $OB$ of right triangle $OMB$, see also Figure~\ref{fig: OMB}.
By the law of reflection, among all points $K$ in the interior of segment $BM$, the one that minimizes $OK+KL$ is the point for which 
$\angle{MKO}=\angle{LKB}$.
%, which angle we also denote by $\omega$. 
We fix such a point $K$, and clearly it is enough to show that $OK+KL>OB$. 

We consider the reflection of the triangle $\Delta OMB$ about axis $BM$. Note that points $O', K, L$ are co-linear, see also Figure~\ref{fig: OMB}.
\begin{figure}[h!]
\begin{center}
 \includegraphics[width=7cm]{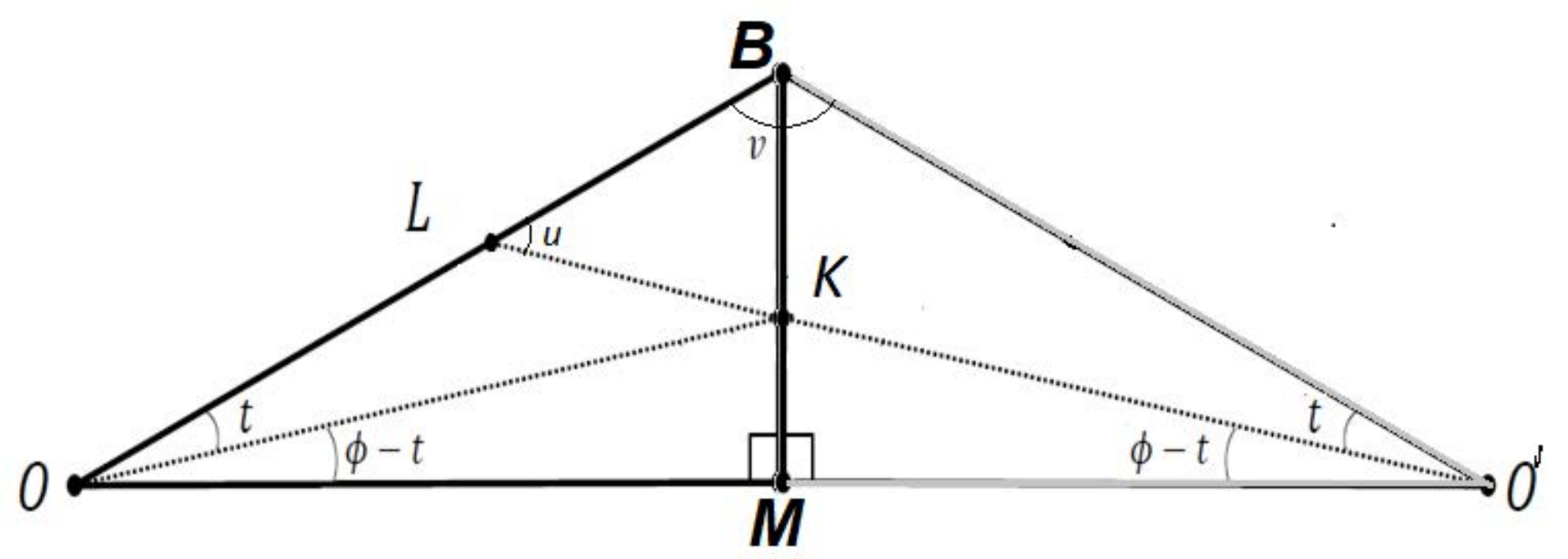}
\end{center}
 \caption{Triangle $OMB$ of the proof of Lemma~\ref{lem: OMB}.}
 \label{fig: OMB}
\end{figure}
For notational convenience, we introduce abbreviations
$\phi:=\angle BOM = \angle BO'M$, 
$t:= \angle BOK = \angle BO'K  $, 
$v:= \angle OBO' = \pi - 2 \cdot \phi$,
$u :=\angle BLO'= \pi - v - t = 2 \cdot \phi - t$.
Therefore $ v - u = \pi + t - 4 \cdot \phi$. 
But since $\phi \leq \frac{\pi}{4}$,  we conclude that $v - u \geq t \geq 0 $. So, in triangle $BLO'$ we have $\angle LBO' \geq \angle BLO'$. Therefore, we conclude that 
$O'L \geq O'B$. By symmetry, we also deduce that $OK + KL \geq OB$, as wanted. 
\end{proof}

Next, using the lemma above, we show that under certain conditions, there are lines that optimal search trajectories cannot have discovered within certain time bounds.

\begin{lemma}
\label{lem: OAB trajectory}
Consider trajectories $\mathcal F_n$ for problem \shoreline{n}, where robots start from origin $O$, and fix time $d>0$. Consider a cone $\mathcal C$ of angle $2\phi$, where $0<\phi \leq \pi/4$, and centered at $O$. Let also $A,B$ be two points at the two extreme rays of the cone, such that $OA=OB=d+\epsilon$, for some $\epsilon>0$. If at time $d$, there is no robot within the cone $\mathcal C$, then the line passing through points $A,B$ could not have been intersected by the trajectory of any robot. 
\end{lemma}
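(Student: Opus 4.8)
The plan is to argue by contradiction and reduce everything to Lemma~\ref{lem: OMB}. Suppose some robot has an arc-length parametrized trajectory $\gamma$ with $\gamma(0)=O$ that meets line $AB$ at some time $\le d$, and let $t_0\le d$ be the first such time, with $P:=\gamma(t_0)$ on line $AB$. First I would locate $P$: since $\gamma$ has unit speed, $OP\le t_0\le d<d+\epsilon=OA=OB$; and since $OA=OB$, the foot of the perpendicular from $O$ to line $AB$ is the midpoint $M$ of $AB$, which is the point of line $AB$ closest to $O$, the distance to $O$ increasing monotonically away from $M$ along the line. Hence $P$ lies in the open segment $AB$, so in particular $P$ lies in the interior of the cone $\mathcal C$. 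If $t_0=d$ we are done, since then $\gamma(d)=P\in\mathcal C$ contradicts the hypothesis; so assume $t_0<d$, and note that by hypothesis $\gamma(d)\notin\mathcal C$.

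Next I would use a continuity argument. As $\gamma(t_0)$ lies in the interior of $\mathcal C$ while $\gamma(d)\notin\mathcal C$, and the boundary of the closed cone $\mathcal C$ is the union of its two bounding rays, there is a first time $t_1\in(t_0,d)$ at which $\gamma$ reaches that boundary; write $L:=\gamma(t_1)$, lying on one of the two rays. Moreover $OL\le t_1<d+\epsilon$, so $L$ lies on segment $OA$ or on segment $OB$. Splitting $\gamma|_{[0,t_1]}$ at the parameter $t_0$ and replacing each of the two pieces by a straight segment, the unit-speed assumption gives $t_1=\operatorname{length}(\gamma|_{[0,t_1]})\ge OP+PL$. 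It therefore suffices to prove the geometric inequality $OP+PL\ge OB=d+\epsilon$, since this forces $d>t_1\ge d+\epsilon$, a contradiction.

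To prove $OP+PL\ge OB$ I would first reflect the configuration across line $OM$ if necessary so that $P$ lies on segment $MB$, and then split into two cases by which ray contains $L$. If $L\in OB$: the triangle $OMB$ has $\angle OMB=\pi/2$ and $\angle BOM=\phi\le\pi/4$, so Lemma~\ref{lem: OMB} applied with $K=P$ gives $OP+PL\ge OB$ immediately. If $L\in OA$: Lemma~\ref{lem: OMB} does not apply directly, and I would argue by hand. The map $X\mapsto\operatorname{dist}(X,\text{line }OA)$ is affine and nonnegative on segment $AB$ and vanishes at $A$, hence is nondecreasing from $A$ towards $B$; so $PL\ge\operatorname{dist}(P,\text{line }OA)\ge\operatorname{dist}(M,\text{line }OA)=OM\sin\phi$. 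Together with $OP\ge OM$ this gives $OP+PL\ge OM(1+\sin\phi)=OB\cos\phi\,(1+\sin\phi)\ge OB$, the last step using the elementary inequality $\cos\phi(1+\sin\phi)\ge1$ on $(0,\pi/4]$, valid because $\cos\phi(1+\sin\phi)-1=\cos\phi(1-\cos\phi)+\sin\phi(\cos\phi-\sin\phi)\ge0$ there.

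I expect the case split above to be the main obstacle: Lemma~\ref{lem: OMB} only covers the configuration in which $L$ lies on the bounding ray on the same side of $OM$ as $P$, and the mixed configuration genuinely needs a supplementary estimate such as the one above. An alternative that handles both cases uniformly, at the cost of not using Lemma~\ref{lem: OMB}, is to reflect $O$ across line $AB$ to a point $O^{*}$: then $OP+PL=O^{*}P+PL\ge O^{*}L$, and because $\angle BOM\le\pi/4$ the perpendicular from $O^{*}$ to each bounding ray has its foot at or beyond the corresponding endpoint $A$ or $B$, so that $A$ and $B$ are the points of segments $OA$ and $OB$ nearest $O^{*}$; hence $O^{*}L\ge O^{*}B=OB$. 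A minor remaining point is to justify the existence of the boundary-crossing $L$ with $OL<d+\epsilon$, which follows from continuity of $\gamma$, the unit-speed bound, and $\gamma(d)$ lying strictly outside the closed cone.
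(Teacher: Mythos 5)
Your proof is correct, and its skeleton is the same as the paper's: argue by contradiction, observe that the offending robot must first hit line $AB$ at an interior point of the segment and then exit the cone through one of the bounding segments $OA$ or $OB$, bound the elapsed time below by the polygonal length $OP+PL$, and reduce to the geometric inequality $OP+PL\geq OB=d+\epsilon$ via Lemma~\ref{lem: OMB}. Where you go beyond the paper is in the case analysis: the paper takes $K$ (your $P$) on the $B$-side of $M$ without loss of generality and then invokes Lemma~\ref{lem: OMB} as if the exit point necessarily lay on $OB$, whereas you correctly note that Lemma~\ref{lem: OMB} only covers that same-side configuration and supply the missing mixed-case estimate ($PL\geq \mathrm{dist}(P,\ell_{OA})\geq OM\sin\phi$ together with $OP\geq OM$ and $\cos\phi\,(1+\sin\phi)\geq 1$ on $(0,\pi/4]$), which checks out. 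Your alternative uniform argument --- reflecting $O$ across line $AB$ to $O^{*}$ and using that $\angle OBO^{*}=\pi-2\phi\geq\pi/2$ forces $O^{*}L\geq O^{*}B=OB$ for $L$ on either bounding segment --- is also valid, and is in fact the same reflection trick used inside the paper's proof of Lemma~\ref{lem: OMB}, just applied directly so that both exit rays are handled at once; it would make a cleaner self-contained substitute. Your continuity bookkeeping (first hitting time $t_0$, first boundary-crossing time $t_1$, and $OL\leq t_1<d+\epsilon$ to place $L$ on the segments rather than beyond $A$ or $B$) is a more careful rendering of the paper's remark that the trajectory cannot have crossed the extreme rays beyond $A,B$ within time $d$, so there is no gap there either.
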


\begin{proof}
The proof is by contradiction, so we assume that a robot's trajectory has intersected line $\ell$ passing through $A,B$ and that the robot is outside cone $\mathcal C$. Since we posed the execution of the algorithm at time $d$, and since $OA=OB>d$, robot's trajectory could not have intersected the extreme rays of cone $\mathcal C$ further than points $A,B$. Since the robot is outside the cone, 
the trajectory of the robot must have intersected (for the first time) segment $AB$ in some interior point $K$ and then segment $OB$ (or $OA$) in some interior point $L$ (after hitting line $\ell$), see also Figure~\ref{fig: OABtrajectory}. Also, without loss of generality, $K$ is closer to $B$ than from $A$. 
\begin{figure}[h!]
\begin{center}
 \includegraphics[width=7cm]{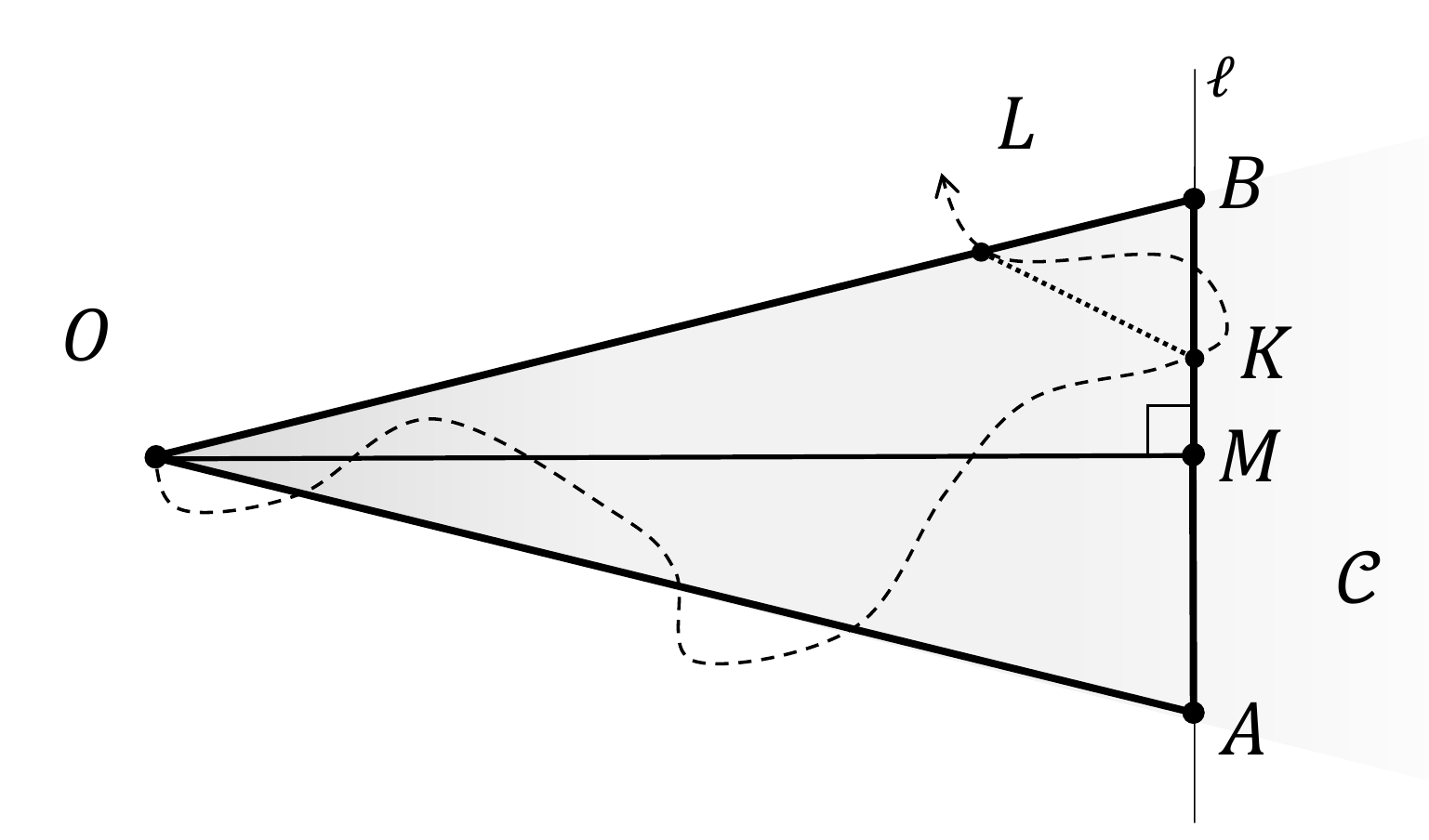}
\end{center}
 \caption{Cone $\mathcal C$ of the proof of Lemma~\ref{lem: OAB trajectory}. Robot's trajectory is depicted as the curved dotted line.}
 \label{fig: OABtrajectory}
\end{figure}
Since robot's trajectory takes place in the Euclidean space, and robot has unit speed, the time for such a trajectory to be realizable is at least $OK+OL$. Consider then the projection $M$ of the origin $O$ onto line segment $AB$, and observe that $\angle{BOM}=\phi\leq \pi/4$, since triangle $BOA$ is isosceles. But then, Lemma~\ref{lem: OMB} applies according to which the time that has passed is at least $OK+OL\geq OB=d+\epsilon>d$, a contradiction. 
\end{proof}

Next we quantify a lower bound to the competitive ratio of search algorithms in which robots exhibit a certain property. 

\begin{lemma}
\label{lem: outside cone implies cr}
Consider trajectories $\mathcal F_n$ for problem \shoreline{n}, where robots start from origin $O$.
If there is a cone of angle $2\phi$ centered at the origin, where $0<\phi \leq \pi/4$, within which there is no robot at an arbitrary time (or a robot lies at the origin), then $CR(\mathcal F_n) \geq 1/\coss{\phi}$.
\end{lemma}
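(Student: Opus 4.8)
The plan is to turn the geometric obstruction of Lemma~\ref{lem: OAB trajectory} into a concrete bad instance and then read the competitive ratio straight off definition~\eqref{equa: search comp ratio}. First I would fix a time $d>0$ at which no robot lies in the cone $\mathcal C$ (of angle $2\phi$, centered at $O$), except possibly for a robot sitting exactly at $O$, and then fix an arbitrary $\epsilon>0$. Let $A,B$ be the points on the two extreme rays of $\mathcal C$ with $OA=OB=d+\epsilon$, and let $\ell$ be the line through $A$ and $B$. Since triangle $OAB$ is isosceles with apex angle $2\phi$ at $O$, the foot $M$ of the perpendicular from $O$ to $\ell$ is the midpoint of $AB$, $\angle BOM=\phi$, and hence
\[
\delta(\ell)=OM=(d+\epsilon)\coss{\phi}.
\]

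Next I would show that $\ell$ is not hit by any robot by time $d$. If at time $d$ there is genuinely no robot inside $\mathcal C$, this is precisely the conclusion of Lemma~\ref{lem: OAB trajectory} applied to this choice of $\mathcal C$, $A$, $B$, $d$, $\epsilon$. In the remaining degenerate case a robot sits at $O$ at time $d$; but then the portion of its trajectory up to time $d$ is a unit-speed curve from $O$ to $O$ of length at most $d$, so if it had ever touched $\ell$ at some point $P$ its length would be at least $|OP|+|PO|\ge 2\,\delta(\ell)=2(d+\epsilon)\coss{\phi}\ge \sqrt2\,(d+\epsilon)>d$, the penultimate inequality using $\phi\le\pi/4$; this contradiction shows $\ell$ is untouched in this case as well. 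In both cases $T_{\mathcal F_n}(\ell)> d$.

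Finally, I would substitute into~\eqref{equa: search comp ratio} to obtain
\[
CR(\mathcal F_n)\ \ge\ \frac{T_{\mathcal F_n}(\ell)}{\delta(\ell)}\ \ge\ \frac{d}{(d+\epsilon)\coss{\phi}},
\]
and, since $\epsilon>0$ was arbitrary, let $\epsilon\to 0^{+}$ to conclude $CR(\mathcal F_n)\ge 1/\coss{\phi}$. (If one wishes to respect the optional restriction $\delta(\ell)\ge\epsilon_0$ in~\eqref{equa: search comp ratio}, it suffices to also take $d$ large, which is harmless since the hypothesis is applied at an arbitrarily late time.) The argument is short once Lemma~\ref{lem: OAB trajectory} is in hand; the only point requiring (minor) care, and the reason the statement carries the clause ``or a robot lies at the origin'', is the degenerate case of a robot located exactly at the origin, which the length estimate above disposes of.
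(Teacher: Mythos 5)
Your proof is correct and follows essentially the same route as the paper: place $A,B$ at distance $d+\epsilon$ on the extreme rays of the empty cone, invoke Lemma~\ref{lem: OAB trajectory} to conclude $T_{\mathcal F_n}(\ell)>d$, compute $\delta(\ell)=(d+\epsilon)\coss{\phi}$ from the isosceles triangle $OAB$, and let $\epsilon\to 0^+$. The only (minor) divergence is the robot-at-the-origin case, which you settle by ruling out a past visit via the closed-trajectory length bound $2(d+\epsilon)\coss{\phi}\geq\sqrt{2}\,(d+\epsilon)>d$, whereas the paper instead observes that such a robot cannot reach $\ell$ before time $d+(d+\epsilon)\coss{\phi}$; both give the claimed bound, and your treatment is if anything the more careful one.
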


\begin{proof}
Consider a time $d>0$, and a cone centered at the origin of angle $2\phi$, such that no robot lies within the cone. 
Consider points $A,B$ on the extreme rays of the cone at distance $d+\epsilon$ from the origin, for an arbitrary small $\epsilon>0$. 
By Lemma~\ref{lem: OAB trajectory}, no robot could have discovered the line $\ell$ passing through points $A,B$. Since time $d$ has already passed, we conclude that the search completion time satisfies $T_{\mathcal{F}_n}(\ell)>d$. At the same time, triangle $OAB$ is isosceles, and so the distance of $\ell$ and the origin is $\delta(\ell)=OB\cdot \coss{\phi} = (d+\epsilon)\coss{\phi}$. We conclude that 
$$
CR(\mathcal F_n)
\geq \sup_{\epsilon>0} \frac{d}{(d+\epsilon)\coss{\phi}}=\frac1{\coss{\phi}}.
$$
Finally consider the case that the only robot within the cone lies at the origin. That robot cannot reach the line earlier than $d+(d+\epsilon)\coss{\phi}$, hence the same bound for the competitive ratio holds. 
\end{proof}

Now we are ready to prove Theorem~\ref{thm: n>=4 lower bound}.

\begin{proof}[Proof of Theorem~\ref{thm: n>=4 lower bound}]
Fix $n$, and consider trajectories $\mathcal F_n$ for problem \shoreline{n}, where robots start from origin $O$.
Let robots move for an arbitrary time $d>0$. 
If all robots lie at the origin, then clearly the competitive ratio is unbounded. 

Otherwise, consider a cone of arbitrary small angle $2\gamma=o(1/n)$ centered at the origin. 
We rotate the cone untill at least one robot (note there exists at least one not in the origin) lies strictly within this small cone. 
Then we cover the plane by concatenating, in an alternate fashion and clockwise to the existing small cone, $n$ many cones centered at the origin of angle $2\pi/n-2\gamma$, and $n-1$ more cones centered at the origin of angle $2\gamma$. 

Note, there are $n$ ``small'' cones of angle $2\gamma$, one of which strictly contains a robot, and $n$ many ``large'' cones of angle $2\pi/n-2\gamma$ and hence one of which, call it $\mathcal C$, does not contain any robot, unless a robot is at the origin. 
But then, Lemma~\ref{lem: outside cone implies cr} applies with $\phi=2\pi/n-2\gamma$, for any $\gamma$. 
That is, for the arbitrary trajectories $\mathcal F_n$, and for every $\gamma>0$ we have that 
$
CR(\mathcal F_n) \geq \frac1{\coss{\pi/n-\gamma}},
$
hence $\mathcal S_n \geq \frac1{\coss{\pi/n}}$ as wanted. 
\end{proof}

\section{Lower Bound for $3$ Robots}
\label{sec: n=3 lower bound}

In this section we prove the following theorem. 

\begin{theorem}
\label{thm: n=3 lower bound}
$\mathcal S_3 \geq \sqrt{3}$.
\end{theorem}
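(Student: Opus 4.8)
The plan is to push the cone argument of Section~\ref{sec: n>=4 lower bound} past its barrier $\phi\le\pi/4$ by choosing a cleverer ``hard line''. First I would fix arbitrary trajectories $\mathcal F_3$ and an arbitrary time $d>0$. If at time $d$ all three robots sit at the origin the competitive ratio is unbounded, so assume otherwise; then the robots away from $O$ span at most three directions as seen from $O$, hence leave an angular gap of at least $2\pi/3$ at time $d$. After a rotation I would take inside this gap a cone $\mathcal C$ bisected by the positive $x$-axis, of half-angle $\phi$ that can be made as close to $\pi/3$ as we wish while keeping $\overline{\mathcal C}$ free of robots at time $d$ (apart possibly from one at the apex $O$); in particular $\phi>\pi/4$.

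The new ingredient is the choice of $\ell$. Rather than the chord through the two extreme points of $\mathcal C$ --- which for $\phi>\pi/4$ is shallow and close enough to $O$ to be both visited and left within time $d$, precisely why the $n\ge4$ argument caps at $\phi=\pi/4$ --- I would take $\ell$ to be the line orthogonal to the bisector of $\mathcal C$ at distance $\delta:=\tfrac{d}{2\sin\phi}+\epsilon$ from $O$, for a small $\epsilon>0$; for $\phi$ near $\pi/3$ this is $\delta\approx d/\sqrt3<d$, so $\ell$ is genuinely within reach. I would then show no robot has hit $\ell$ by time $d$. At time $d$ a robot is either at $O$ (so it has not reached the positive‑distance line $\ell$) or outside $\mathcal C$. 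In the latter case: if it first meets $\ell$ at a point outside $\mathcal C$, that point is at distance $\ge\delta/\cos\phi>d$ from $O$, impossible; and if it first meets $\ell$ at a point $K$ inside $\mathcal C$, then, being outside $\mathcal C$ at time $d$, it must afterwards cross a bounding ray of $\mathcal C$ at some point $L$, so it travels at least $|OK|+\mathrm{dist}(K,\partial\mathcal C)$. Parametrizing $K$ by its angle $|\beta|\le\phi$ off the bisector gives $|OK|=\delta/\cos\beta$ and $\mathrm{dist}(K,\partial\mathcal C)=|OK|\sin(\phi-|\beta|)$, so this travel is at least $\delta\,(1+\sin(\phi-|\beta|))/\cos\beta\ \ge\ 2\delta\sin\phi\ =\ d+2\epsilon\sin\phi\ >\ d$ --- again impossible. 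Hence $T_{\mathcal F_3}(\ell)>d$ while $\delta(\ell)=\delta$, so $CR(\mathcal F_3)\ge d/\delta$; letting $\epsilon\to0$ and $\phi\to\pi/3$ yields $CR(\mathcal F_3)\ge 2\sin(\pi/3)=\sqrt3$, hence $\mathcal S_3\ge\sqrt3$.

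The crux, and the one place with real work, is the uniform inequality $\delta\,(1+\sin(\phi-|\beta|))/\cos\beta\ge 2\delta\sin\phi$ over all admissible first‑hit points on $\ell\cap\overline{\mathcal C}$: the minimum over $\beta$ is attained at the interior value $|\beta|=\tfrac\pi2-\phi$, which lies in $[-\phi,\phi]$ exactly when $\phi\ge\pi/4$ --- this is why the construction is effective for the wide $2\pi/3$ gap forced by three robots but useless for the narrow $2\pi/n$ gaps of the $n\ge4$ regime (where the weaker $1/\coss{\pi/n}$ bound must instead be used). The remaining points are bookkeeping of exactly the kind already handled in the proof of Theorem~\ref{thm: n>=4 lower bound}: robots exactly on $\partial\mathcal C$, a robot sitting at the apex, or fewer than three distinct robot directions --- all of which only enlarge the empty gap and are absorbed by shrinking $\mathcal C$ slightly and passing to the limit.
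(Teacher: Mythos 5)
Your proposal is correct and essentially reproduces the paper's own argument: your hard line (perpendicular to the cone's bisector at distance $d/(2\sin\phi)+\epsilon\to d/\sqrt{3}$) is exactly the chord $AB$ used in the paper's proof, and your key inequality $\delta\left(1+\sin(\phi-|\beta|)\right)/\cos\beta\geq 2\delta\sin\phi$ is the same minimum that Lemma~\ref{lem: scaling} computes by calculus (value $\sqrt{3}/2$ per unit $OB$), combined with the same empty-$2\pi/3$-cone observation and limiting argument. The only imprecise spot is the parenthetical for a robot sitting at $O$ at time $d$ --- being at $O$ does not by itself rule out an earlier visit to $\ell$; the correct (and immediate) justification is that such a round trip would take at least $2\delta=d/\sin\phi+2\epsilon>d$.
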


Notably, the achieved lower bound does not match the best upper bound known for \shoreline{3}. 
In particular, the lower bound arguments of Section~\ref{sec: n>=4 lower bound} fail for \shoreline{n}, when $n<4$. 
Indeed, the crux of the previous argument is that robots should lie at the boundary (extreme rays) of cones, centered at the origin and of angles $2\phi_n := 2\pi/n$. If robots are given, say, time 1 to execute their trajectories, then there are special lines which are $\coss{\phi_n}+\epsilon$ away from the origin, that could not have been visited by any robot, because otherwise the robots would not have enough time to leave from some cones. The crucial necessary condition of the previous statement is that $\phi_n\leq \pi/4$, which of course holds when $n\geq 4$. 
In the case of $n=3$, robots can visit these special lines in time less than 1, still leaving the cones of angle $2\pi/3$, hence making the argument invalid. However, the robots would still need a significant amount of time (bounded away from 0) to achieve the same task, hence placing the special lines sufficiently further away would allow the argument to go through. 

The paragraph above gives the high level idea of the proof of Theorem~\ref{thm: n=3 lower bound}, and also explains why we presented first, in Section~\ref{sec: n>=4 lower bound}, the lower bounds for $n\geq 4$ robots. 
The next lemma establishes a lower bound for the time that robots need, in \shoreline{3}, to discover the special lines that were used for the lower bounds to $\mathcal S_n$, for $n\geq 4$. 
\begin{lemma}
\label{lem: scaling}
Consider a cone of angle $2\pi/3$ centered at the origin $O$, along with two points $A,B$ on its extreme rays at distance 1 from $O$. 
Then, a unit speed robot starting from $O$ requires time at least $\sqrt{3}/2$ to visit the line passing through $A,B$ and leave the cone. 
\end{lemma}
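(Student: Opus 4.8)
The plan is to set up Cartesian coordinates with $\mathcal C$ placed symmetrically about the positive $x$-axis, so that its half-angle is $\pi/3$, its two extreme rays are $\{y=\sqrt3 x,\ x\ge0\}$ and $\{y=-\sqrt3 x,\ x\ge0\}$, the two designated endpoints are $A=(1/2,\sqrt3/2)$ and $B=(1/2,-\sqrt3/2)$, and the line through them is $\ell:\ x=1/2$ (so indeed $OA=OB=1$, and incidentally $\delta(\ell)=\cos(\pi/3)=1/2$). Fix any unit-speed trajectory that starts at $O$, touches $\ell$, and has left the cone by the time it finishes; let $K$ be the first point of $\ell$ that it touches. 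I would bound the arc length of the trajectory, up to the instant the task is completed, from below by $\sqrt3/2$, by a case distinction on whether $K$ lies inside or outside $\mathcal C$.

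If $K$ lies outside the open cone $\mathcal C$, then $K=(1/2,y)$ with $|y|\ge\sqrt3/2$, so $OK\ge\sqrt{1/4+3/4}=1$, and the sub-path from $O$ to $K$ alone already has length at least $1>\sqrt3/2$. If instead $K$ lies strictly inside $\mathcal C$, then at the moment $\ell$ is first touched the robot is inside $\mathcal C$, whereas at the completion instant it is outside $\mathcal C$; hence the portion of the trajectory after $K$ must cross the boundary $\partial\mathcal C$, say first at a point $L$. Consequently the total length is at least $OK+KL$ with $K\in\ell$ and $L\in\partial\mathcal C$.

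To lower-bound $OK+KL$ I would reuse the reflection idea from the proof of Lemma~\ref{lem: OMB}. Reflecting $O$ across $\ell$ produces $O'=(1,0)$; since the reflection is an isometry fixing $\ell$ pointwise, $OK+KL=O'K+KL\ge O'L$ by the triangle inequality, and $O'L\ge d(O',\partial\mathcal C)$ because $L\in\partial\mathcal C$. A one-line computation — minimizing $|(1,0)-(t/2,\pm t\sqrt3/2)|^2=1-t+t^2$ over $t\ge0$ gives $t=1/2$ and value $3/4$ — shows $d(O',\partial\mathcal C)=\sqrt3/2$, so the trajectory has length at least $\sqrt3/2$ in this case too, which proves the lemma. (Equivalently one can use $KL\ge d(K,\partial\mathcal C)$ and minimize the explicit one-variable function $\sqrt{1/4+y^2}+(\sqrt3/2-|y|)/2$ of the ordinate $y$ of $K$ over $|y|\le\sqrt3/2$; its minimum is $\sqrt3/2$, attained at $y=\pm\sqrt3/6$, which also exhibits the worst-case trajectory.)

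The only delicate point — and the reason the argument of Section~\ref{sec: n>=4 lower bound} had to be replaced here — is the accounting of when the robot is inside versus outside $\mathcal C$: since $O$ is the apex of $\mathcal C$, a robot can cross $\partial\mathcal C$ at negligible cost near $O$, so the bound genuinely depends on the robot being outside $\mathcal C$ at the instant it has \emph{finished} touching $\ell$ (not merely at some earlier time), together with the observation that if its first touch point $K$ of $\ell$ is itself outside $\mathcal C$ then $OK\ge1$. Once this bookkeeping is in place, the rest is the same reflection inequality used in Lemma~\ref{lem: OMB}, now producing the constant $\sqrt3/2$ in place of $OB$. I expect this case bookkeeping, rather than any computation, to be the main obstacle.
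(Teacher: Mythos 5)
Your proof is correct, and its skeleton --- reduce the task to a path $O \to K \to L$ with $K$ on the line through $A,B$ and $L$ on the cone boundary, then lower-bound $OK+KL$ by $\sqrt3/2$ --- is the same as the paper's. Where you differ is in how that bound is obtained: the paper parametrizes $K$ on the segment $MB$ by $\lambda$, writes $f(\lambda)=\tfrac12\sqrt{3\lambda^2+1}+\tfrac{\sqrt3}{4}(1-\lambda)$ and minimizes by calculus ($\lambda_0=1/3$, $f(1/3)=\sqrt3/2$), whereas your main argument reflects $O$ across the line to $O'$ (the same unfolding idea as in Lemma~\ref{lem: OMB}) and reads the bound off as $OK+KL\ge O'L\ge d(O',\partial\mathcal C)=\sqrt3/2$, with only a one-line quadratic minimization and no derivatives; your parenthetical one-variable minimization is exactly the paper's computation in different coordinates, with the same minimizer ($K$ at distance $\sqrt3/6$ from the foot $M$). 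Your write-up also makes explicit two points the paper leaves implicit: the case where the first touch point of the line lies outside the cone (then $OK\ge OB=1>\sqrt3/2$), and robustness to trajectories that exit and re-enter the cone before touching the line, since your bound only uses the robot's position at the first touch and at the end. Finally, it is worth noting that the reflection here needs no analogue of the angle restriction $\phi\le\pi/4$ of Lemma~\ref{lem: OMB}, because you compare $O'L$ with the distance from $O'$ to the boundary rays rather than with $O'B$; this is exactly why the constant drops from $OB=1$ to $\sqrt3/2$, matching the paper's value.
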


\begin{proof}
Consider the projection $M$ of $O$ onto the line $\ell_{AB}$ passing through $A,B$.
We calculate the shortest trajectory starting from $O$, visiting an arbitrary point $K$ of line segment $MB$ and leaving the cone from an arbitrary point $L$ of line segment $OB$, see also Figure~\ref{fig: scaling}.
\begin{figure}[h!]
\begin{center}
 \includegraphics[width=7cm]{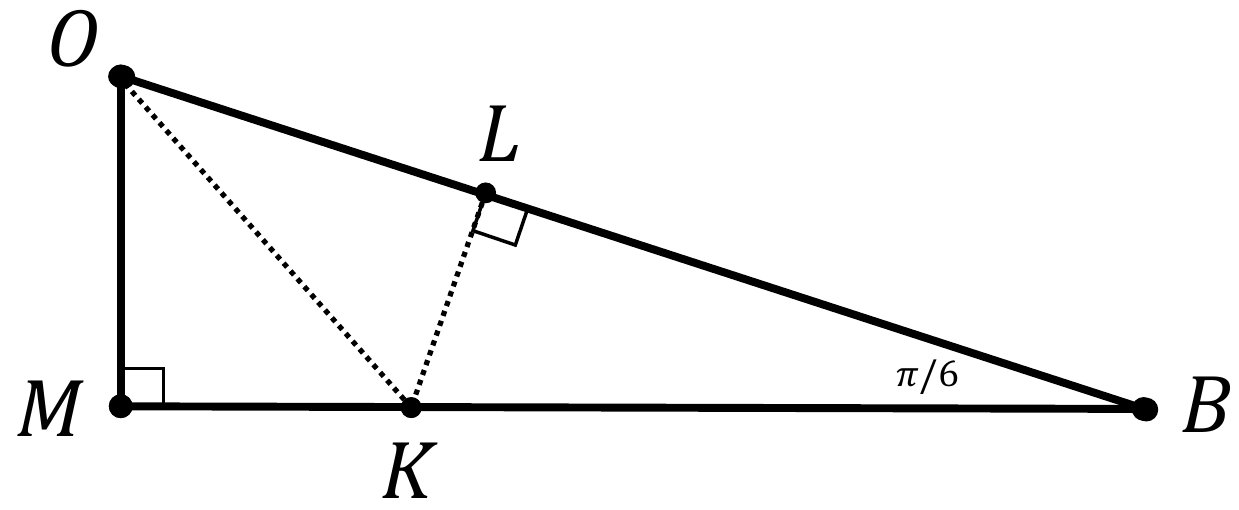}
\end{center}
 \caption{The shortest trajectory, starting from $O$, visiting line segment $MB$ at point $K$ and leaving the cone from point $L$ from an extreme ray, is depicted with dotted lines. }
 \label{fig: scaling}
\end{figure}

For convenience, we introduce a coordinate system centered at $M$, so that $O=(0,1/2)$ and $B=(\sqrt{3}/2,0)$ (recall that $OB=1$). 
The arbitrary point $K$ on $MB$ is a convex combination of points $M,B$, and hence has coordinates $K=\lambda (\sqrt{3}/2,0)$, for some $\lambda \in [0,1]$. Note that $OK=\sqrt{\tfrac34\lambda^2+\tfrac14}=\tfrac12\sqrt{3\lambda^2+1}$.
Given that $K$ is chosen, the shortest path for leaving the cone is clearly the distance $d(K,\ell_{AB})$ between $K$ and $\ell_{OB}$ passing through $O,B$. It is easy to see that $\ell_{OB}$ is described as $y+\sqrt{3}/3x-1/2=0$, hence, 
$$
d(K,\ell_{OB}) = \frac{
\left|\tfrac{\sqrt3}3\lambda\tfrac{\sqrt3}2-\tfrac12\right|
}{\sqrt{1+\tfrac39}}
=\frac{\sqrt{3}}4(1-\lambda).
$$

We conclude that the shortest path in order to start from $O$ and leave the cone is 
$$
\min_{\lambda\in [0,1]} 
\left\{
OK+
d(K,\ell_{OB})
\right\}
=
\min_{\lambda\in [0,1]} 
\left\{
\frac12\sqrt{3\lambda^2+1}+\frac{\sqrt{3}}4(1-\lambda)
\right\}
$$
We calculate the derivative of the latter function $f(\lambda)$ as $f'(\lambda)=\tfrac{3 \lambda }{2 \sqrt{3 \lambda ^2+1}}-\frac{\sqrt{3}}{4}$, which has a unique root at $\lambda_0=1/3$. Then, we calculate $f''(1/3)=\tfrac{9 \sqrt{3}}{16}$, which shows that $\lambda_0\in [0,1]$ is indeed a minimizer, inducing a trajectory of smallest length $f(1/3)=\sqrt{3}/2$. Since the robot has unit speed, this is also the minimum time needed to reach the cone after visiting line $\ell_{AB}$. 
\end{proof}

We are now ready to prove Theorem~\ref{thm: n=3 lower bound}. 

\begin{proof}[Proof of Theorem~\ref{thm: n=3 lower bound}]
Consider trajectories $\mathcal F_3$ for problem \shoreline{3}, where robots start from origin $O$.
Let robots move for an arbitrary time $d>0$, and consider 3 cones centered at the origin, each of angle $2\pi/3$, covering the entire plane. 
We rotate the cones, untill at least one of the robots lies on the extreme ray of two cones.
As a result, there is a cone $\mathcal C$ such that no robot lies within the interior of the cone. 

Now consider points $A,B$ on the extreme rays of $\mathcal C$ that are $(2/\sqrt{3}+2\epsilon)d$ away from the origin $O$, and let the line passing through them be $\ell_{AB}$. Note that $\ell_{AB}$ is exactly $(1/\sqrt{3}+\epsilon)d$ away from the origin. 

By Lemma~\ref{lem: scaling}, and since no robot lies within $\mathcal C$, no robot could have discovered $\ell_{AB}$ in time $d$, hence the search completion time is at least $d$. Overall, that induces competitive ratio for $\mathcal F_3$ at least $1/(1/\sqrt{3}+\epsilon)$, for every $\epsilon>0$. 
\end{proof}

\section{Lower Bound for $2$ Robots}
\label{sec: n=2 lower bound}

In this section we prove a lower bound for searching for a shoreline with 2 robots. 

\begin{theorem}
\label{thm: n=2 lower bound}
$\mathcal S_2 \geq 3$.
\end{theorem}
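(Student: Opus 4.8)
The plan is to recast everything in a ``coverage'' reformulation. For a unit‑speed trajectory $R_i:[0,\infty)\to\reals^2$ starting at $O$ and a direction $\theta$, set
\[
g_i(\theta,t):=\max_{0\le s\le t}\big\langle R_i(s),\cycle{\theta}\big\rangle,
\]
the farthest robot $i$ has reached ``in direction $\theta$'' by time $t$; equivalently $g_i(\cdot,t)$ is the support function of $K_i(t):=\mathrm{conv}\big(\{O\}\cup R_i([0,t])\big)$. A line at distance $\delta$ with outward normal $\cycle{\theta}$ is hit by time $t$ iff $\max(g_1(\theta,t),g_2(\theta,t))\ge\delta$, so writing $m(t):=\min_{\theta}\max(g_1(\theta,t),g_2(\theta,t))$ — which is exactly the radius of the largest disk centered at $O$ contained in $Q_t:=\mathrm{conv}(K_1(t)\cup K_2(t))$ — one gets $CR(\mathcal F_2)=\sup_{t>0}t/m(t)$. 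Thus it suffices to show that \emph{for every pair of trajectories there is a $t>0$ with $m(t)\le t/3$}: two unit‑speed robots cannot keep the convex hull of their joint trajectory containing the disk of radius $t/3$ at all times. I would first record what is \emph{not} enough. The elementary facts $g_i\ge0$, $g_i(\theta,t)+g_i(\theta+\pi,t)\le t$, and $\int_0^{2\pi}g_i(\theta,t)\,d\theta=\mathrm{perimeter}(K_i(t))\le2t$ (perimeter of the convex hull of a length‑$\le t$ curve is $\le2t$) only give $2\pi\,m(t)\le\int\max(g_1,g_2)\le\int g_1+\int g_2\le4t$, i.e. $CR\ge\pi/2$; and the ``empty cone'' argument of Sections~\ref{sec: n>=4 lower bound}--\ref{sec: n=3 lower bound} only guarantees, for $n=2$, an empty cone of half‑angle approaching $\pi/2$, and the half‑angle‑$\phi$ analogue of Lemma~\ref{lem: scaling} (required travel $OB\cdot\sinn{2\phi}$) then yields only $CR\ge2\sinn{\phi}\to2$.

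The new ingredient, and the reason one can hope for the constant $3$, is that the robots are two \emph{localized} points. Suppose toward a contradiction that $m(t)\ge t/3$ for all $t>0$, and call $\theta$ \emph{critical at time $t$} if $\max_i g_i(\theta,t)=t/3$. For such $\theta$ and any $t'>t$ we have $\max_i g_i(\theta,t')\ge m(t')\ge t'/3>t/3$; writing $g_i(\theta,t')=\max\big(g_i(\theta,t),\ \max_{s\in[t,t']}\langle R_i(s),\cycle{\theta}\rangle\big)$ and using that a robot moves a distance $\le t'-t$ on $[t,t']$, some robot $i$ must have $\langle R_i(t),\cycle{\theta}\rangle\ge t'/3-(t'-t)$; letting $t'\downarrow t$ forces $\langle R_i(t),\cycle{\theta}\rangle\ge t/3$, while also $\langle R_i(t),\cycle{\theta}\rangle\le g_i(\theta,t)\le t/3$, so $\langle R_i(t),\cycle{\theta}\rangle=t/3$ for one of the two \emph{current} positions. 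Hence at every instant
\[
\{\theta:\theta\text{ critical at }t\}\ \subseteq\ \{\theta:\langle R_1(t),\cycle{\theta}\rangle=t/3\}\cup\{\theta:\langle R_2(t),\cycle{\theta}\rangle=t/3\},
\]
which (as $|R_i(t)|\le t$) is a set of at most $4$ directions. In particular, if ever $Q_t$ is itself a disk (all directions critical), then $\mathrm{conv}\{R_1(t),R_2(t)\}\supseteq D(O,m(t))$ — impossible, a segment cannot contain a positive‑radius disk. So the two points must ceaselessly sweep all directions in order to keep feeding them at rate $\ge1/3$.

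\textbf{The main obstacle} is converting this into a quantitatively sharp contradiction — extracting exactly the constant $3$, not merely $\pi/2$ or $2$ — in the general (non self‑similar) case. The plan is to work on a window such as $[t,3t]$: every direction not already fed ``to level $t$'' by time $t$ must be serviced during $(t,3t]$ by a robot whose position then lies within angle $\arccoss{1/3}$ of it (it must reach a point $P$ with $\langle P,\cycle{\theta}\rangle\ge t$ and $|P|\le3t$), so the two robots' combined angular sweep over that window is at least $2\pi-4\arccoss{1/3}$ up to boundedly many exceptional directions; comparing this with the bound $|\dot\beta_i|\le1/|R_i(t)|$ on a robot's angular speed together with the radial distance it simultaneously spends should overdraw the combined motion for $t$ large and force $m(t)<t/3$. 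Making the arithmetic of this sweeping/feeding trade‑off close \emph{exactly} at $3$ (the perimeter and single‑cone arguments lose constants precisely here) is the technical heart, and is presumably why $n=2$ is treated last and separately from $n\ge3$. A possible alternative staying closer to the earlier sections: fix the revealing time, place \emph{two} threatening lines $\ell_1,\ell_2$, one in each of the two robot‑free regions then available and both at the limiting half‑angle, let the adversary reveal whichever is discovered last, and use an $OMB$‑type reflection estimate (in the spirit of Lemma~\ref{lem: OMB}) to show a single robot cannot reach both before time $\approx3\,\delta$ while the other robot is too far away to help in time.
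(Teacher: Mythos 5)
Your write-up is not a proof: everything after ``\textbf{The main obstacle}'' is an acknowledged plan, not an argument. The reformulation via support functions, the bound $CR\ge\pi/2$, and the observation that a critical direction must be witnessed by one of the two \emph{current} positions are all fine, but they stop well short of the constant $3$; the window-$[t,3t]$ sweeping/feeding bookkeeping and the two-threatening-lines alternative are both left as hopes (``should overdraw'', ``presumably''), and it is not at all clear that either closes exactly at $3$ rather than at some weaker constant. So the statement is not established by what you wrote.

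The missing idea is much simpler than your sweeping scheme, and it is where the paper goes: freeze the execution at time $1$. The two current positions $R_1,R_2$ lie (after a rotation/reflection, w.l.o.g.) in a common closed half-plane $\{y\ge0\}$ through the origin, with $R_1=(\epsilon,0)$, $R_2=(\delta\coss{\theta},\delta\sinn{\theta})$, $\theta\in[0,\pi]$. Any point $P$ visited by robot $i$ satisfies $OP+PR_i\le1$, so the visited set of robot $i$ is contained in the ellipse with foci $O,R_i$ and distance-sum $1$; since the center $R_i/2$ has nonnegative $y$-coordinate and the semi-major axis is $1/2$, this ellipse never goes below $y=-1/2$ (the paper verifies this by a discriminant computation against $y=-1/2-\zeta$). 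Hence the line $y=-1/2-\zeta$ is undiscovered at time $1$, it is at distance $1/2+\zeta$ from the origin, and both robots are at distance at least $1/2+\zeta$ from it, giving $CR\ge(3/2+\zeta)/(1/2+\zeta)\to3$ as $\zeta\to0$. In other words, the constant $3$ comes from ``time elapsed $+$ remaining distance'' against a single adversarial line placed behind the common half-plane of the two current positions — an ingredient your critical-direction analysis gestures at (localization of the current positions) but never converts into a bound. To salvage your attempt you would need to actually carry out the quantitative sweep argument, which is considerably harder than this one-line adversary placement.
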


In order to prove our theorem, we describe robots' trajectories within time 1. The following function, the boundary of an ellipsoid, will be useful in our calculations
$$
q(x,y,\delta,\theta):=
4
\left( 
\coss{\theta}x + \sinn{\theta}y - h_\delta
\right)
^2 
+
\left( 
-\sinn{\theta}x + \coss{\theta}y
\right)
^2
/ b_\delta^2
-1,
$$
where $h_\delta := \delta/2$ and $b_\delta := \sqrt{(1-\delta^2)/2}$.

\begin{lemma}
\label{lem: ellipses}
Consider an arbitrary algorithm $\mathcal F_2$, and let robots execute it for time 1.
Then, there exist $\epsilon, \delta\in [0,1]$ and $\theta \in [0,\pi]$ so that no point outside the ellipses 
$ q(x,y,\epsilon,0)\leq 0,$ and $q(x,y,\delta,\theta)\leq 0$ has been explored by any robot. 
\end{lemma}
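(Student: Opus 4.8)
The plan is to catch each robot's entire trajectory inside a single member of the family of ellipses defined by $q$, reading off the parameters from where the two robots sit at time $1$. So first I would run $\mathcal F_2$ for time $1$ and let $P_1,P_2\in\reals^2$ be the positions of the two robots at that time; since each robot moves at unit speed out of $O$, $\norm{P_i}\le 1$. The elementary key observation is that for every $t\in[0,1]$ the part of robot $i$'s trajectory before time $t$ has length at most $t$ and the part after time $t$ has length at most $1-t$; hence, writing $\gamma_i(t)$ for robot $i$'s position at time $t$, we get $\norm{\gamma_i(t)}+\norm{\gamma_i(t)-P_i}\le t+(1-t)=1$. Therefore the whole trajectory of robot $i$ lies in the closed elliptical disk $\mathcal E_i:=\{z\in\reals^2:\ \norm{z}+\norm{z-P_i}\le 1\}$ bounded by the ellipse with foci $O,P_i$ and sum of focal distances $1$. (When $P_i=O$ this is the disk of radius $1/2$; when $\norm{P_i}=1$ it is the segment $OP_i$; in every case $\mathcal E_i$ is convex and contains $O$, so it also contains the segment joining $O$ to any trajectory point, which makes the argument robust to the precise reading of ``explored''.)

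Next I would make the two families match up. With $u:=\coss{\theta}x+\sinn{\theta}y$ and $v:=-\sinn{\theta}x+\coss{\theta}y$ — the coordinates of $(x,y)$ in the frame rotated by $\theta$ — the region $q(x,y,\delta,\theta)\le 0$ is precisely $\tfrac{(u-\delta/2)^2}{(1/2)^2}+\tfrac{v^2}{b_\delta^2}\le 1$: the ellipse centered at $\tfrac\delta2(\coss\theta,\sinn\theta)$ with semi-axis $1/2$ in direction $\theta$ and semi-axis $b_\delta=\sqrt{(1-\delta^2)/2}$ in the orthogonal direction. On the other hand, writing $\rho_i:=\norm{P_i}$ and $\psi_i$ for the direction of $P_i$ from $O$, the same normal form shows $\mathcal E_i$ is the ellipse with the \emph{same} center $\tfrac{\rho_i}2(\coss{\psi_i},\sinn{\psi_i})$ and the \emph{same} semi-axis $1/2$ in direction $\psi_i$, but with the strictly smaller orthogonal semi-axis $\tfrac12\sqrt{1-\rho_i^2}\le b_{\rho_i}$. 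Since two ellipses sharing a center and one semi-axis are nested according to the other semi-axis, this gives $\mathcal E_i\subseteq\{(x,y):q(x,y,\rho_i,\psi_i)\le 0\}$.

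Finally I would normalize the coordinate frame. Because \shoreline{2} is invariant under rigid motions fixing $O$, we may choose Cartesian axes so that $P_1$ lies on the non-negative $x$-axis (so $\psi_1=0$) and, after reflecting across the $x$-axis if needed, so that $P_2$ lies in the half-plane $\{y\ge 0\}$ (so $\psi_2\in[0,\pi]$). Setting $\epsilon:=\rho_1\in[0,1]$, $\delta:=\rho_2\in[0,1]$ and $\theta:=\psi_2\in[0,\pi]$, the previous step shows that the union of the two trajectories — hence the whole explored region — lies inside $\{q(x,y,\epsilon,0)\le 0\}\cup\{q(x,y,\delta,\theta)\le 0\}$, which is the claim. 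The degenerate case $\rho_i=1$ (where $b_{\rho_i}=0$) is handled by reading the ``ellipse'' as the segment $OP_i$, which contains the then straight-line trajectory. The only step that is not bookkeeping is the ellipse comparison in the previous paragraph, and the cleanest way to do it is exactly to pass to the rotated $(u,v)$-coordinates in which both ellipses are axis-aligned with common center and common major semi-axis $1/2$, so that the nesting is immediate; I do not expect any genuine obstacle, since the factor-$\sqrt2$ gap between $b_{\rho_i}$ and the true minor semi-axis of $\mathcal E_i$ is presumably slack reserved for the later lemmas, not needed here.
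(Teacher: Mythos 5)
Your proof is correct and follows essentially the same route as the paper's: record the two positions $P_1,P_2$ at time 1, use the unit-speed triangle-inequality bound $\|z\|+\|z-P_i\|\le 1$ to trap each robot's entire trajectory in the ellipse with foci $O,P_i$ and focal sum $1$, and normalize by a rotation/reflection so that $P_1=(\epsilon,0)$ and $P_2$ lies in the closed upper half-plane with $\theta\in[0,\pi]$. Your additional nesting step (true semi-minor axis $\tfrac12\sqrt{1-\rho_i^2}\le b_{\rho_i}$) is sound and is exactly what is needed for the lemma as literally stated, but be aware that the stated $b_\delta=\sqrt{(1-\delta^2)/2}$ appears to be a typo for $\tfrac12\sqrt{1-\delta^2}$ rather than deliberate slack: the discriminant computed in the proof of Theorem~\ref{thm: n=2 lower bound} is the one obtained with $b_\delta^2=(1-\delta^2)/4$, and with the larger stated value that theorem's non-intersection claim would fail (e.g.\ $\delta=0$ gives an ellipse reaching $|y|=1/\sqrt{2}>1/2$).
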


\begin{proof}
Consider an arbitrary algorithm $\mathcal F_2$, and let robots execute it for time 1. Suppose that the locations of robots \#1, \#2 are  $R_1,R_2$ within the unit ball. We claim that the collection of points that each robot could have visited form two ellipses. Indeed, without loss of generality both robots lie in the first two quadrants, i.e. in the non-negative y-axis hyperplane. Also, without loss of generality, the location of $R_1$, exactly at time 1 of the execution of $\mathcal F_2$, equals $(\epsilon, 0)$, for some $\epsilon \in [0,1]$. Recall that robots have speed 1, and they started from the origin $O=(0,0)$. Therefore, all the points $P$ that could have been visited by robot \#1 satisfy $OP + PR_1 \leq 1$. In other words, the boundary of the explored domain of that robot is an ellipse with foci points $O,R_1$ and sum of distances to foci equal to 1. 
Then, all boundary points $(x,y) \in \reals^2$ of the domain that could have been explored by robot \#1 satisfy 
$
4(x-h_\epsilon)^2 + y^2/b_\epsilon^2 =1.
$

Similarly, robot \#2 is, at the same time, at distance $\delta$ from the origin, for some $\delta \in [0,1]$. Arguing as above, the boundary of the explored domain by robot \#2 is again an ellipse. 
So suppose that $R_2= \left( \delta \coss{\theta}, \delta \sinn{\theta}\right)$ for some $\theta \in [0,\pi]$ (that is the line passing through its two foci forms angle $\theta$ with the $x$-axis) Then the boundary of the explored domain by robot \#2 is defined as 
$
4
\left( 
\coss{\theta}x + \sinn{\theta}y - h_\delta
\right)
^2 
+
\left( 
-\sinn{\theta}x + \coss{\theta}y
\right)
^2
/ b_\delta^2
=1
$
(observe that a rotation by angle $-\theta$ gives a formula identical to the one of robot \#1). Finally, note that since robot \#2 lies in the first two quadrants, we must have $\theta \in [0,\pi]$. 
%Proof of claim: Let $(x,y)$ be the points of the new ellipse. Rotating the points by angle $-\theta$ defines the previous ellipse with points $(x',y')$ satisfying the previous equation with $\delta$ instead of $\epsilon$. But then, $R_{-\theta} (x,y)^T = (x',y')^T$, and so the formula follows. 
\end{proof}

The idea behind the proof of  Theorem~\ref{thm: n=2 lower bound} is that if at a certain time, robots \#1, \#2 are at points $R_1= (\epsilon,0)$ and $R_2= \left( \delta \coss{\theta}, \delta \sinn{\theta}\right)$ (for some $\epsilon,\delta \in [0,1]$ and $\theta\in [0,\pi]$), respectively, then they could not have been in any point past the line $y=-1/2$, see Figure~\ref{fig: ellipse example} for an example. 
\begin{figure}[h!]
\begin{center}
 \includegraphics[width=8cm]{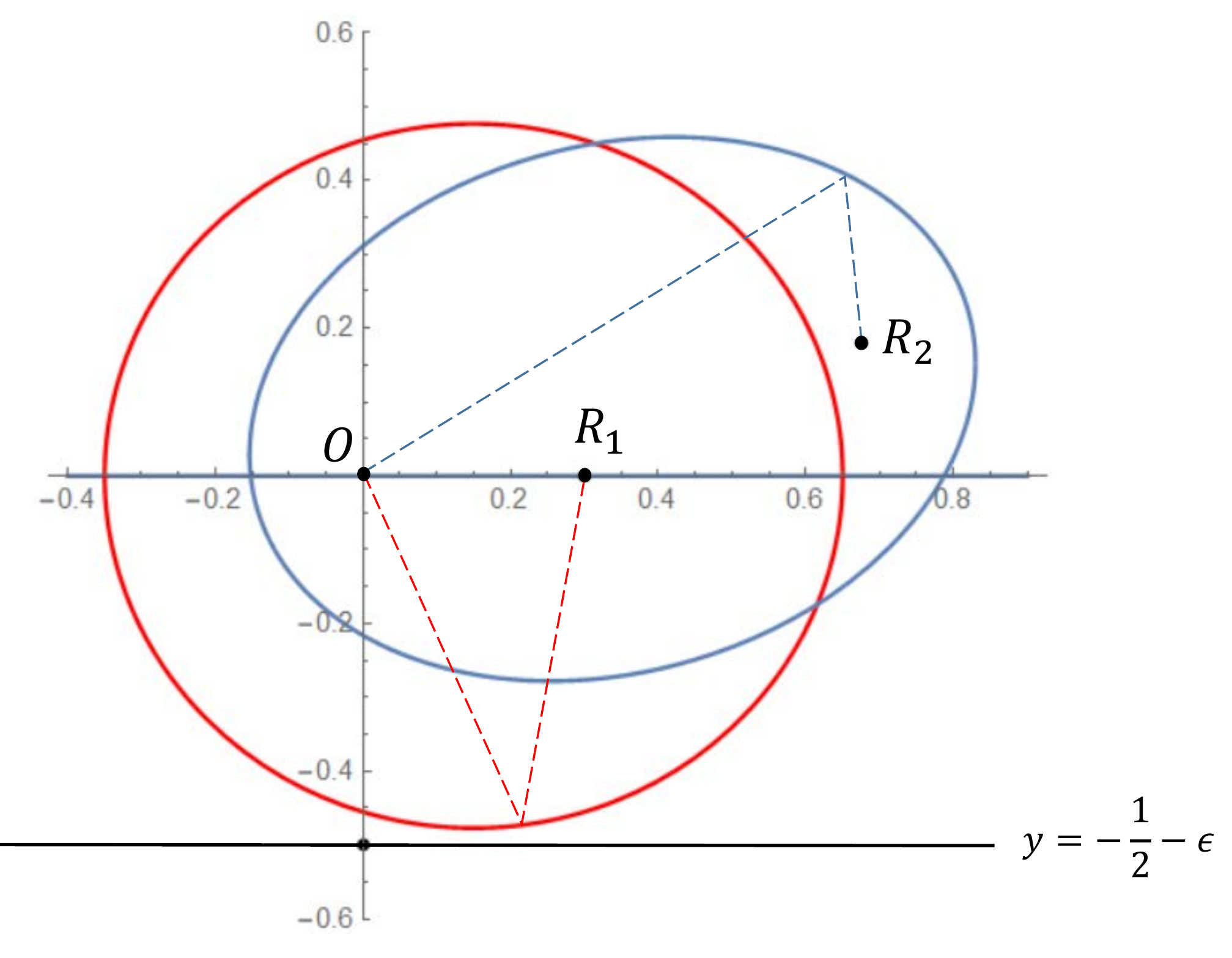}
\end{center}
 \caption{An example of possible robots' placements of an arbitrary search algorithm after time 1. Without loss of generality, robot \#1 lies on the positive x-axis, here depicted as point $R_1=(0.3,0)$. Similarly, robot \#2 lies, without loss of generality, in any of the first two orthants. Here it is depicted as point $R_2=(0.7\coss{\pi/12}, 0.7\sinn{\pi/12})$. Both robots have started from the origin, indicated by $O$. Possible points that robots \#1,\#2 have visited are depicted by the red and blue ellipses, respectively. Dotted straight trajectories show possible robots' movements from the origin, then to arbitrary points on the boundaries of the ellipses, and then to points $R_1,R_2$.}
 \label{fig: ellipse example}
\end{figure}
We are now ready to prove Theorem~\ref{thm: n=2 lower bound}.

\begin{proof}[Proof of Theorem~\ref{thm: n=2 lower bound}]
Consider an arbitrary search algorithm $\mathcal F_2$. According to Lemma~\ref{lem: ellipses}, there are two ellipses that define all points that could have been explored by any of the robots. 
Our main claim is that neither of the two robots could have hit line $\ell:~y=-1/2-\zeta$, where $\zeta>0$ is arbitrarily small. For that, all we need is to show that none of the equations defining any of the two ellipses have any common point with $y=-1/2-\zeta$. To that end, we show that equation $q(x,-1/2-\zeta,\delta,\theta)=0$ has no real root, when $\delta\in [0,1], \theta \in [0,\pi]$ and $\zeta>0$ is sufficiently small (that would also imply the same for the first ellipse). Indeed, we compute the discriminant of the degree 2 polynomial $q(x,-1/2-\zeta,\delta,\theta)$ (in $x$) which equals 
\begin{align*}
-\frac{16 \left(\delta ^2+2 \delta  (2 \zeta +1) \sin (\theta )+4 \zeta  (\zeta +1)\right)}{1-\delta ^2}
&\leq 
-\frac{16 \left(\delta ^2+4 \zeta  (\zeta +1)\right)}{1-\delta ^2}
\end{align*}
Since $\zeta>0$ and arbitrarily small, the latter expression is maximized for $\delta=0$ and becomes $-64 \left(\zeta ^2+\zeta \right)$ which is negative for all small enough $\zeta>0$. 

%We are now ready to conclude the proof of Theorem~\ref{thm: n=2 lower bound}. 
Note that the closest robot to line $\ell$ is robot \#1 (or robot \# 2 too, when $\theta=0,\pi$), and its distance to that line equals $1/2+\zeta$. Since time 1 has already passed, the search completion time is at least $3/2+\zeta$. At the same time, the optimal offline solution equals $1/2+\zeta$. Hence, the competitive ratio of the arbitrary search algorithm $\mathcal F_2$ is at least 
$$
\sup_{\zeta>0}\frac{3/2+\zeta}{1/2+\zeta}=3. 
$$
\end{proof}

\section{Open Problems}
We studied the problem of searching for a shoreline with $n$ robots, and in particular we gave strong lower bounds when $n\geq 2$. Our results are tight when $n\geq 4$, completely resolving these cases. The cases $n=2,3$ as well as the case $n=1$, which is not addressed in this work, remain open. It is plausible that the best algorithms known  when $n=1,2$ are indeed optimal, even though a proof seems to be particularly challenging. The case of $n=3$ seems to be the most interesting since the upper bound is provided by the same algorithm as for the cases $n\geq 4$, still our argument that shows optimality for the latter cases fails to be tight when $n=3$. 
Finally, a number of variations of the shoreline problem remain open. These include the cases of different robots specs, e.g. speeds, the possibility of faulty robots, different termination criteria, e.g. evacuation or rendezvous instead of search, different measures of efficiency, e.g. average-case worst-case tradeoffs, etc.

\section*{Acknowledgments}

This research was supported by NSERC Discovery and MITACS Globalinks grants. 

\bibliographystyle{plain}
\bibliography{BiblioSearch}

\begin{thebibliography}{10}

\bibitem{alpern2003theory}
S.~Alpern and S.~Gal.
\newblock {\em The theory of search games and rendezvous}.
\newblock Springer, 2003.

\bibitem{alpern2013search}
Steve Alpern, Robbert Fokkink, L~Gasieniec, Roy Lindelauf, and VS~Subrahmanian.
\newblock {\em Search theory}.
\newblock Springer, 2013.

\bibitem{alpern2006theory}
Steve Alpern and Shmuel Gal.
\newblock {\em The theory of search games and rendezvous}, volume~55.
\newblock Springer Science \& Business Media, 2006.

\bibitem{baeza1997searching}
Ricardo Baeza-Yates.
\newblock Searching: an algorithmic tour.
\newblock {\em Encyclopedia of Computer Science and Technology}, 37:331--359,
  1997.

\bibitem{baeza1995parallel}
Ricardo Baeza-Yates and Ren{\'e} Schott.
\newblock Parallel searching in the plane.
\newblock {\em Computational Geometry}, 5(3):143--154, 1995.

\bibitem{baeza1988searching}
Ricardo~A Baeza-Yates, Joseph~C Culberson, and Gregory~JE Rawlins.
\newblock Searching with uncertainty.
\newblock In {\em Scandinavian Workshop on Algorithm Theory}, pages 176--189.
  Springer, 1988.

\bibitem{baezayates1993searching}
Ricardo~A Baezayates, Joseph~C Culberson, and Gregory~JE Rawlins.
\newblock Searching in the plane.
\newblock {\em Information and computation}, 106(2):234--252, 1993.

\bibitem{beck1964linear}
Anatole Beck.
\newblock On the linear search problem.
\newblock {\em Israel Journal of Mathematics}, 2(4):221--228, 1964.

\bibitem{bellman1963optimal}
Richard Bellman.
\newblock An optimal search.
\newblock {\em Siam Review}, 5(3):274, 1963.

\bibitem{benkoski1991survey}
Stanley~J Benkoski, Michael~G Monticino, and James~R Weisinger.
\newblock A survey of the search theory literature.
\newblock {\em Naval Research Logistics (NRL)}, 38(4):469--494, 1991.

\bibitem{borodin2005online}
Allan Borodin and Ran El-Yaniv.
\newblock {\em Online computation and competitive analysis}.
\newblock Cambridge University Press, 2005.

\bibitem{bouchard2018deterministic}
S{\'e}bastien Bouchard, Yoann Dieudonn{\'e}, Andrzej Pelc, and Franck Petit.
\newblock Deterministic treasure hunt in the plane with angular hints.
\newblock In {\em 29th International Symposium on Algorithms and Computation,
  ISAAC 2018}, volume 123, pages 48--1. Schloss Dagstuhl--Leibniz-Zentrum fuer
  Informatik, 2018.

\bibitem{CGK19search}
J.~Czyzowicz, K.~Georgiou, and E.~Kranakis.
\newblock Group search and evacuation.
\newblock In P.~Flocchini, G.~Prencipe, and N.~Santoro, editors, {\em
  Distributed Computing by Mobile Entities; Current Research in Moving and
  Computing}, chapter~14, pages 335--370. Springer, 2019.

\bibitem{dobbie1968survey}
James~M Dobbie.
\newblock A survey of search theory.
\newblock {\em Operations Research}, 16(3):525--537, 1968.

\bibitem{Emekicalp2014}
Y.~Emek, T.~Langner, J.~Uitto, and R.~Wattenhofer.
\newblock Solving the ants problem with asynchronous finite state machines.
\newblock In {\em Proceedings of International Colloquium on Automata,
  Languages, and Programming {(ICALP)}, LNCS 8573}, pages 471--482, 2014.

\bibitem{emek2015many}
Yuval Emek, Tobias Langner, David Stolz, Jara Uitto, and Roger Wattenhofer.
\newblock How many ants does it take to find the food?
\newblock {\em Theoretical Computer Science}, 608:255--267, 2015.

\bibitem{finch2005searching}
Steven~R Finch and Li-Yan Zhu.
\newblock Searching for a shoreline.
\newblock {\em arXiv preprint math/0501123}, 2005.

\bibitem{fricke2016distributed}
G~Matthew Fricke, Joshua~P Hecker, Antonio~D Griego, Linh~T Tran, and Melanie~E
  Moses.
\newblock A distributed deterministic spiral search algorithm for swarms.
\newblock In {\em 2016 IEEE/RSJ International Conference on Intelligent Robots
  and Systems (IROS)}, pages 4430--4436. IEEE, 2016.

\bibitem{gal2010search}
Shmuel Gal.
\newblock Search games.
\newblock {\em Wiley Encyclopedia of Operations Research and Management
  Science}, 2010.

\bibitem{gluss1961alternative}
Brian Gluss.
\newblock An alternative solution to the “lost at sea” problem.
\newblock {\em Naval Research Logistics Quarterly}, 8(1):117--122, 1961.

\bibitem{gluss1961minimax}
Brian Gluss.
\newblock The minimax path in a search for a circle in a plane.
\newblock {\em Naval Research Logistics Quarterly}, 8(4):357--360, 1961.

\bibitem{isbell1957optimal}
JR~Isbell.
\newblock An optimal search pattern.
\newblock {\em Naval Research Logistics Quarterly}, 4(4):357--359, 1957.

\bibitem{jez2009two}
Artur Je{\.z} and Jakub {\L}opusza{\'n}ski.
\newblock On the two-dimensional cow search problem.
\newblock {\em Information Processing Letters}, 109(11):543--547, 2009.

\bibitem{langetepe2010optimality}
Elmar Langetepe.
\newblock On the optimality of spiral search.
\newblock In {\em Proceedings of the twenty-first annual ACM-SIAM symposium on
  Discrete Algorithms}, pages 1--12. SIAM, 2010.

\bibitem{langetepe2012searching}
Elmar Langetepe.
\newblock Searching for an axis-parallel shoreline.
\newblock {\em Theoretical Computer Science}, 447:85--99, 2012.

\bibitem{LangnerKUW15}
Tobias Langner, Barbara Keller, Jara Uitto, and Roger Wattenhofer.
\newblock Overcoming obstacles with ants.
\newblock In Emmanuelle Anceaume, Christian Cachin, and Maria~Gradinariu
  Potop-Butucaru, editors, {\em International Conference on Principles of
  Distributed Systems {(OPODIS)}}, volume~46 of {\em LIPIcs}, pages 9:1--9:17.
  Schloss Dagstuhl - Leibniz-Zentrum fuer Informatik, 2015.

\bibitem{Lenzen2014}
C.~Lenzen, N.~Lynch, C.~Newport, and T.~Radeva.
\newblock Trade-offs between selection complexity and performance when
  searching the plane without communication.
\newblock In {\em Proceedings of the Symposium on Principles of Distributed
  Computing {(PODC)}}, pages 252--261, 2014.

\bibitem{LS01}
A.~L\'opez-Ortiz and G.~Sweet.
\newblock Parallel searching on a lattice.
\newblock In {\em Proceedings of the Canadian Conference on Computational
  Geometry {(CCCG)}}, pages 125--128, 2001.

\bibitem{pelc2018reaching}
Andrzej Pelc.
\newblock Reaching a target in the plane with no information.
\newblock {\em Information Processing Letters}, 140:13--17, 2018.

\bibitem{pelc2018information}
Andrzej Pelc and Ram~Narayan Yadav.
\newblock Information complexity of treasure hunt in geometric terrains.
\newblock {\em arXiv preprint arXiv:1811.06823}, 2018.

\bibitem{pelc2019cost}
Andrzej Pelc and Ram~Narayan Yadav.
\newblock Cost vs. information tradeoffs for treasure hunt in the plane.
\newblock {\em arXiv preprint arXiv:1902.06090}, 2019.

\end{thebibliography}
%\bibliographystyle{abbrvnat}
%\bibliography{}

\end{document}